\newtheorem{theorem}{Theorem}[section]
\newtheorem{lemma}[theorem]{Lemma}
\theoremstyle{proposition}
\theoremstyle{corollary}
\theoremstyle{definition}
\numberwithin{equation}{section}
\newcommand{\cE}{{\mathcal E}}
\newcommand{\cF}{{\mathcal F}}
\newcommand{\cG}{{\mathcal G}}
\newcommand{\cH}{{\mathcal H}}
\newcommand{\Z}{\mathbb Z}
\newcommand{\F}{\mathbb F}
\newcommand{\bt}{{\bf \theta}}
\newcommand{\la}{{\langle}}
\newcommand{\ra}{{\rangle}}
\begin{document}
\title[]{Last fall degree of semi-local polynomial systems}
\author[]{Ming-Deh A. Huang (USC, mdhuang@usc.edu)}
\address{Computer Science Department,University of Southern California, U.S.A.}
\email{mdhuang@usc.edu}

\urladdr{}

\begin{abstract}
 We study the last fall degrees of {\em semi-local} polynomial systems, and the computational complexity of solving such systems for closed-point and rational-point solutions, where the systems are defined over a finite field.
A semi-local polynomial system specifies an algebraic set which is the image of a global linear transformation of a direct product of local affine algebraic sets.  As a special but interesting case, polynomial systems that arise from Weil restriction of algebraic sets in an affine space of low dimension are semi-local.  Such systems have received considerable attention due to their application in cryptography. Our main results bound the last fall degree of a semi-local polynomial system in terms of the number of closed point solutions, and yield an efficient algorithm for finding all rational-point solutions when the prime characteristic of the finite field and the number of rational solutions are small.
Our results on solving semi-local systems imply an improvement on a previously known polynomial-time attack on the HFE (Hidden Field Equations) cryptosystems. The attacks implied in our results extend to public key encryption functions which are based on semi-local systems where either the number of closed point solutions is small, or the characteristic of the field is small.  It remains plausible  to construct public key cryptosystems based on semi-local systems over a finite field of large prime characteristic with exponential number of closed point solutions.  Such a method is presented in the paper, followed by further cryptanalysis involving the isomorphism of polynomials (IP) problem, as well as a concrete public key encryption scheme which is secure against all the attacks discussed in this paper.

\end{abstract}

\maketitle

\section{Introduction}
In this paper, we study the last fall degrees of {\em semi-local} polynomial systems, and the computational complexity of solving such systems, for closed point as well as $k$-rational points when the systems are defined over a finite field $k$.
A semi-local polynomial system specifies an algebraic set which is the image of a (global) linear transformation of a direct product of (local) affine algebraic sets.  As a special but interesting case, polynomial systems that arise from Weil restriction of algebraic sets in an affine space of low dimension are semi-local.  The computational complexity of solving polynomial systems arising from Weil restriction have received considerable attention due to their application in cryptography
(see for examples \cite{BET,DIN,FAU3,GRA,HKY,HKYY}).

Suppose $\cF=\{ F_1,\ldots, F_m\}$ where $F_i \in k[x_1,\ldots,x_n]$ and $k$ is a field.  We often refer to $\cF$ as a polynomial system (consisting of $m$ polynomials in $n$ variables).  We say the system is square if $n=m$.  For $\mu\in Gl_m (k)$ represented by $m$ by $m$ matrix
$(a_{ij})_{1\le i,j \le m}$ with $a_{ij}\in k$, let $\mu\circ\cF$ denote the ordered set $\{ G_1,\ldots G_m\}$ where $G_i = \sum_{j=1}^m a_{ij} F_j$.
For $\lambda\in Gl_n (k)$ represented by $n$ by $n$ matrix  $(\alpha_{ij})_{1\le i,j \le n}$ with $\alpha_{ij}\in k$, let $\cF\circ\lambda$ denote the ordered set of polynomials
$\{ F_1\circ\lambda,\ldots,F_m\circ\lambda\}$, where for $F\in k[x_1,\ldots,x_n]$, $F\circ\lambda = F (L_1,\ldots, L_n)$ with $L_i = \sum_{j=1}^n \alpha_{ij} x_j$.

We say that $\cF$ is $c$-{\em local} if $n=c\ell$ for some positive integer $\ell$, and $\cF$ can be partitioned into subsets $\cF_1,\ldots,\cF_{\ell}$ where the polynomials in $\cF_i$ are  polynomials in $x_{ci-i+1},\ldots, x_{ci}$ for $i=1,\ldots,\ell$.  We say that an ordered set of polynomials $\cG=\{G_1,\ldots,G_m\}$ is $c$-{\em semi-local} if $\cG = \mu\circ\cF\circ\lambda$ for some $\mu\in Gl_m (k)$,$ \lambda\in Gl_n (k)$, $\cF\subset k[x_1,\ldots,x_n]$, and $\cF$ is $c$-local.

Suppose $K$ is a finite extension of a finite field $k$ of degree $n$, and $V$ is an algebraic set in $\bar{k}^c$ defined over $K$.  Then a $K/k$-Weil-restriction of $V$ is linearly isomorphic over $K$ to the direct product of the conjugate algebraic sets of $V$ over $k$, hence can be described as $Z(\cG)$ for some  $c$-semi-local polynomial system $\cG$ (which we call a {\em Weil descent system}, see \S~\ref{weil} for more details).  The case where $V$ is zero-dimensional and defined by a single univariate polynomial over $K$ was studied in the context of HFE (Hidden Field Equations) cryptosystems, where decryption amounts to solving for $k$-rational solutions of a semi-local polynomial system in $n=[K:k]$ variables that describes the $K/k$-Weil restriction of $V$ \cite{BET,DIN,FAU3,GRA,HKY,HKYY}.

Suppose $\cF$ is a finite set of polynomials in $k[x_1,\ldots,x_n]$ and $k$ is a field.
The {\em first fall degree} of $\cF$ is the smallest $d\ge \deg \cF$ such that there exist $g_i\in k[x_1,\ldots,x_n]$, $f_i\in \cF$ with $\deg g_i f_i \le d$, $\sum_i g_i f_i \neq 0$, and $\deg(\sum_i g_i f_i) < d$.
The concept of first fall degree was used to heuristically bound the complexity of Gr\"{o}bner basis algorithm \cite{PQ}.
The concept of {\em last fall degree} was subsequently introduced in \cite{HKY} for deriving rigorous complexity bound in solving polynomial systems.
For $i\in\Z_{\ge 0}$, let $R_i$ denotes the set of polynomials in $k[x_1,\ldots,x_n]$ with degree no greater than $i$, and $V_i$ denote the largest subset of
the ideal generated by $\cF$ that can be constructed by doing ideal operations without exceeding degree $i$.   The last fall degree
of $\cF$ is the largest $c\in\Z_{\ge 0}$ such that $V_c \cap R_{c-1} \neq V_{c-1}$ (see \S~\ref{prep} for more details).
As shown in \cite{HKY,HKYY} the last fall degree  is intrinsic to a polynomial
system, independent of the choice of a monomial order, always bounded by the degree of regularity,  and  invariant under linear change of variables and linear change of equations.

Let $\bar{k}$ denote the algebraic closure of $k$.  We assume $\cF_i$ is zero-dimensional for all $i$, that is, the zero set $Z(\cF_i) = \{ (a_1,\ldots,a_c)\in\bar{k}^c: f(a_1,\ldots,a_c)=0$ for all $f\in \cF_i\}$ is finite.   In addition we assume that the ideal generated by $\cF_i$ is radical for all $i$.

In general a zero-dimensional polynomial system $\cG$ in $n$ variables of degree bounded by $d$ has at most $d^n$ points and can be solved in $d^{O(n)}$ time \cite{LL}.    Suppose the last fall degree of $\cG$ is $\delta$.  If $| Z(\cG)| = 1$ and the ideal generated by $\cG$ is radical, then the unique point in $Z(\cG)$ can be found in time $n^{O(\delta)}$, which is polynomial if $\delta=O(1)$ (see \S~\ref{first}).
More generally, it is an interesting question whether the running time of solving $\cG$ can depend on the cardinality of the zero set $Z(\cG)$ and the last fall degree of $\cG$.
Suppose $\cG$ is defined over a finite field $k$ and the number of $k$-rational solutions in $Z(\cG)$ is small, say bounded by a constant.  It is also an interesting question whether such solutions can be found efficiently.
It was proven in \cite{H} that the last fall degree is $O(d)$ for a 1-semi-local system $\cG=\mu\circ\cF\circ\lambda$ where $\cF$ consisting of $n$ univariate polynomials of degree bounded by $d$.  Consequently, if $| Z(\cG) | = O(1)$ and $d=O(1)$,  then  $Z(\cG)$ can be determined in $n^{O(d)} = n^{O(1)}$ time.
The last fall degree of the system $\cG'$ consisting of $\cG$ together with field equations, so that $Z(\cG') = Z_k (\cG)$, is shown to be $O(dq)$.  The field equations $x_i^q - x_i$ constrain the solutions looked for to be rational over $k$, where $| k | = q$.  It was shown that by replacing $x_i^q-x_i$ with $x_i^p - x_{i1}$, $x_{i1}^p - x_{i2}$, ..., $x_{i\ m-1}^p - x_i$, where $q=p^m$, the resulting system $\cG''$, with $Z(\cG'')$ isomorphic to $Z_k (\cG)$, has last fall degree $O(dp)$.   Consequently, if $| Z_k (\cG) | = O(1)$, then $Z_k (\cG)$ can be determined in $n^{O(dp)} = n^{O(1)}$ time if $d = O(1)$ and $p = O(1)$.

Note that the last fall degree of a 1-semi-local system $\cG$ and that of $\cG''$, where $Z(\cG'')\simeq Z_k (\cG)$, both do not depend on $n$.
In this paper we prove that this is true in general for zero-dimensional semi-local systems $\cG$ assuming the ideals generated by the local systems $\cF_i$ are radical.

In Theorem~\ref{solveG} we prove that suppose $|Z(\cG)|=s > 0$, and for $i=1,\ldots,\ell$, the ideal generated by $\cF_i$ is radical and $d_{\cF_i} \le c'$ for some constant $c'$.  Then
$d_{\cG} \le s+ c'\lceil\log_2 s\rceil$.  Moreover,
the $s$ points in $Z(\cG)$ can be found in time
$n^{O(c')}+ s^{O(c\log s)}$.   Note that this is polynomial in $n$ if $c$, $c'$, $s$ are all $O(1)$.

In Theorem~\ref{lfdk} we prove that if the ideal generated by $\cF_i$ is radical for all $i$, and $|Z_k (\cF) | = 1$, then
$d_{\cG\cup\cE'} = O(\Delta p)$, where $\Delta$ is the maximum of $\{ | Z(\cF_i)|, d_{\cF_i}: i=1,\ldots,\ell\}$.
More generally, if  $|Z_k (\cF) | =s_0$ and $\left(\begin{array}{c} s\\2 \end{array}\right) < |k|$ where $| Z(\cF_i) | \le s$ for all $i$,  then $d_{\cG\cup\cE'} = O(\max(\Delta p, s_0 p, s_0 \log s_0))$; where
$\cE'=\{x_{i0}^p-x_{i1},\ldots, x_{i \ m-1}^p-x_{i0}: i=1,\ldots,n\}$, so that $Z(\cG\cup\cE')\simeq Z_k (\cG)$.
Moreover, the $k$-rational points in $Z(\cG)$ can be found in  $(nd)^{O(\Delta p)} + s_0^{O(\log s_0)}$ time, where $|k| = p^d$.

Suppose $c$ and $\deg F$, $d_{\cF_i}$ are all $O(1)$.  Theorem~\ref{solveG} and Theorem~\ref{lfdk} together imply that $Z_k (\cG)$ can be determined efficiently if either $| Z (\cG) | = O(1)$, or both $| Z_k (\cG) |$ and $p$ are $O(1)$.

In Theorem~\ref{wd} we show that the Weil descent, with respect to a finite field extension $K$ over $k$, of $c$ polynomials in $c$ variables over $K$, is $c$-semi-local.  This paves the way for applying the results in Theorem~\ref{solveG} and \ref{lfdk} to HFE (Hidden Field Equations) cryptosystems in \S~\ref{crypto}.  We remark that the characterization of the Weil descent systems as semi-local systems may have further applications.  For example, with the characterization the results in Theorem~\ref{solveG} and \ref{lfdk} may be applied to improve the index calculus method  \cite{Ga} for solving the discrete logarithm problem of abelian varieties, where Weil descent polynomial systems are formed and solved for the purpose of finding relations.

Cryptographic applications are discussed in \S~\ref{crypto}.  In the cryptographic context, $\cG$ is the public encryption function and $\cF$ may be assumed to be known.  The secret decryption key consists of  $\mu$ and $\lambda$ such that $\cG = \mu\circ\cF\circ\lambda$.  With $\lambda$ and $\mu$ known, the decryption problem reduces to finding $k$-rational $x$ such that $z=\lambda(x)$ and $\mu^{-1} (y) = \cF (z)$.

When $\cF$ is the $K/k$-Weil descent of a univariate polynomial $f\in k[x]$ with respect to a basis $\theta_i$, $i=1,\ldots,n$, of $K$ over $k$, we refer to the system defined by $\cG$ as an HFE-system, following \cite{HKY,HKYY}.
The classic HFE-cryptosystem first introduced by Patarin \cite{P} is a special case where $f$ is the {\em extended Dembowski-Ostrom polynomial}
\[ f= \sum_{0\le i\le j < r} a_{ij} x^{q^i+q^j} +\sum_{0\le i < r} b_i x^{q^i} + c \]
with $a_{ij}, b_i, c \in K$.   The {\em HFEv-signature scheme} is constructed by introducing additional {\em Vinegar} variables,  replacing each $b_i$ with a $K$-linear form in the Vinegar variables and replacing $c$ with a quadratic form in the Vinegar variables.   A direct forgery attack amounts to solving for $x\in k^{n+e}$ such that $\cG (x)=y$, given (message) $y\in k^{n-a}$ (where $e$ is the number of Vinegar variables). Here
$\cG=\mu\circ\cF\circ\lambda$ with $\mu,\lambda,\cF$ randomly chosen in secret, where
$\lambda\in Gl_{n+e} (k)$,  $\mu\in Gl_{n-a} (k)$, and $\cF$ is the $K/k$-Weil restriction of some modified extended Dembowski-Ostrom polynomial $f$ (with the Vinegar variables treated as unknown constants in $k$ when performing the Weil restriction).
We note that our results (as well the results in \cite{GMP,HKY,HKYY}) cannot be applied for solving such polynomial systems since the systems are not zero-dimensional.
We refer to \cite{DPT} for more recent advances in breaking the HFEv-signature scheme based on MinRank type attacks.

For the general HFE-system where the encryption function $\cG$ is constructed from the Weil restriction of a univariate polynomial, given $y\in k^n$, one is interested in finding $k$-rational solutions to $\cG(x)=y$.  The solution set is $Z_k (\cG-y) = Z(\{\cG-y\}\cup\cE)$ where $\cE=\{x_i^q-x_i: i=1,\ldots,n\}$ and $k=\F_q$.  The results in \cite{HKY} bounds the last fall degree of $\{\cG-y\}\cup\cE$ by $O(dq)$, leading to an attack that runs in $n^{O(dq)}$ time.  The more recent results in \cite{GMP,HKYY} bounds the last fall degree of  $\{\cG-y\}\cup\cE$ by $O(q\log_q d)$
(with the bound in \cite{GMP} improving on the bound in \cite{HKYY} by a factor of 1/2), leading to an attack that solves the system in $n^{O(q\log_q d)}$ time.
When $d=O(1)$ and $q=O(1)$, the running time is polynomial in $n$ in these attacks.

For $y\in k^n$ such that $\cG (x)=y$ has a unique $k$-rational solution $x$, our results on solving semi-local systems imply an $n^{O(dp)}$ time bound for finding $x$.  In comparison to \cite{GMP, HKY}, our bound is better when $dp$ is substantially smaller than $q$.

An interesting question is whether more general semi-local polynomial systems, especially those which do not arise from Weil restriction, can still be used as the basis for constructing cryptosystems.  The attacks implied in our results extend to public key encryption functions which are based on semi-local systems where either the number of closed point solutions is small, or the characteristic of the field is small.  It remains plausible  to construct public key cryptosystems based on semi-local systems over a finite field of large prime characteristic with exponential number of closed point solutions.  Such a method is presented in \S~\ref{crypto}.

Solving the polynomial system $\cG(x)=y$ also reduces to finding a semi-local decomposition $\cG=\mu\circ\cF\circ\lambda$ for $\cG$.   This problem of finding the secret key $\mu$ and $\lambda$ in the cryptographic context is the semi-local case of the {\em isomorphism of polynomials} (IP) problem: Given two polynomial maps $\cF$ and $\cG$ from $k^n$ to $k^m$, given by two sets of $n$-variate polynomials $\cG = (G_1,\ldots, G_m)$ and $\cF=(F_1,\ldots,F_m)$,  to find
$\lambda\in Gl_n(k)$, $\mu \in Gl_m (k)$ such that $\cG = \mu\circ\cF\circ\lambda$.

We extend the determinant-of-Jacobian method \cite{BFP,Kay} to solve the square 1-semi-local case of the isomorphism of polynomials (IP) problem.   We also discuss how the method may be applied to mount a partial attack for square $c$-semi-local systems ($n=m$), when $c > 1$.   We show how to modify the square system construction into non-square systems to avoid the determinant-of-Jacobian attack.
We describe a concrete and simple public key encryption scheme based on a non-square 2-semi-local system. The system is secure against all the attacks discussed in this paper.  Whether an efficient attack can be found on the system is an interesting open problem that requires further investigation.

\section{Preparation}
\label{prep}
We recall the following notation and definition concerning last fall degrees \cite{HKY,HKYY}.

Suppose $\mathcal{F}$ is a finite set of polynomials in  $k[x_1,\ldots,x_n]$.
For $i \in \Z_{\geq 0}$, let $V_{\mathcal{F},i}$ be the smallest $k$-vector space such that
\begin{enumerate}
\item
$\{f \in \mathcal{F}: \deg(f) \leq i \} \subseteq V_{\mathcal{F},i}$;
\item
if $g \in V_{\mathcal{F},i}$ and if $h \in k[ x_1,\ldots,x_n]$ with $\deg(hg) \leq i$, then $hg \in V_{\mathcal{F},i}$.
\end{enumerate}

We write $f\equiv_i g \pmod{\mathcal{F}}$, for $f,g\in k[x_1,\ldots,x_n]$, if $f-g\in V_{\mathcal{F},i}$.

Put $R=k[x_1,\ldots,n]$ and let $R_i$ denote the $k$-vector space consisting of polynomials in $R$ of degree no greater than $i$ for $i \ge 0$.  The \emph{last fall degree} as defined in \cite{HKY} (see also \cite{HKYY}) is the largest $d$ such that $V_{\mathcal{F},d} \cap R_{d-1} \neq V_{\mathcal{F},d-1}$. We denote the last fall degree of $\mathcal{F}$ by $d_{\mathcal{F}}$.

The following simple result illustrates how the last fall degree can be related to the computational complexity of solving  a polynomial system.

\begin{lemma}
\label{first}
Suppose $\mathcal{F}$ is a finite set of polynomials in  $k[x_1,\ldots,x_n]$.  If the ideal generated by $\mathcal{F}$ is radical and $Z(\mathcal{F})$ has a unique point, then the unique point can be found in $n^{O(d_{\mathcal{F}})}$ time.

\end{lemma}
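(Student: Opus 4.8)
The plan is to exploit the defining property of the last fall degree: working only with ideal operations up to degree $d_{\mathcal{F}}$ already captures everything about the ideal that matters for extracting the unique point. Concretely, I would first observe that $V_{\mathcal{F}, d_{\mathcal{F}}}$ can be computed in time $n^{O(d_{\mathcal{F}})}$: the space $R_{d_{\mathcal{F}}}$ of polynomials of degree at most $d_{\mathcal{F}}$ has dimension $\binom{n + d_{\mathcal{F}}}{d_{\mathcal{F}}} = n^{O(d_{\mathcal{F}})}$, and $V_{\mathcal{F}, d_{\mathcal{F}}}$ is obtained by starting from the generators of degree $\le d_{\mathcal{F}}$ and closing under multiplication by monomials (subject to the degree cap) and taking $k$-linear combinations — a saturation process on a vector space of dimension $n^{O(d_{\mathcal{F}})}$, hence terminating in $n^{O(d_{\mathcal{F}})}$ steps with each step a linear-algebra operation of the same cost.

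Next I would argue that $V_{\mathcal{F}, d_{\mathcal{F}}}$ contains, for each variable $x_j$, a nonzero univariate polynomial in $x_j$. Since $I = (\mathcal{F})$ is radical and zero-dimensional (as $Z(\mathcal{F})$ is a single point, a fortiori finite), the elimination ideal $I \cap k[x_j]$ is generated by a nonzero squarefree polynomial $p_j(x_j)$, and because $Z(\mathcal{F})$ is a single point $(a_1,\dots,a_n)$ with $a_j \in \bar k$, the minimal polynomial of $a_j$ over $k$ divides $p_j$; in particular there is a nonzero $q_j(x_j) \in I$ whose roots include $a_j$. The key point is that the definition of $d_{\mathcal{F}}$ as the \emph{last} fall degree means that no degree fall happens above $d_{\mathcal{F}}$, so in fact $V_{\mathcal{F}, i} \cap R_{i-1} = V_{\mathcal{F}, i-1}$ for all $i > d_{\mathcal{F}}$, and one shows inductively that every element of $I$ of degree $e$ lies in the ideal-theoretic span of $V_{\mathcal{F}, d_{\mathcal{F}}}$ in a way that, restricted to $k[x_j]$, already forces $I \cap k[x_j]$ to have a nonzero element realized inside $V_{\mathcal{F}, d_{\mathcal{F}}}$ after reduction. (This is essentially the content of the last-fall-degree machinery of \cite{HKY, HKYY}, which I would cite for the statement that $V_{\mathcal{F}, d_{\mathcal{F}}}$ already "sees" a Gr\"obner-basis-like witness in each variable.)

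Finally, having extracted nonzero $q_j(x_j) \in V_{\mathcal{F}, d_{\mathcal{F}}} \cap k[x_j]$ for each $j$, I would factor each $q_j$ over $k$ (polynomial time in its degree, which is $O(d_{\mathcal{F}})$, via standard univariate factorization), giving finitely many candidate values for each coordinate $a_j$; since $|Z(\mathcal{F})| = 1$, taking for each $j$ the product of the irreducible factors that are compatible — or simply testing each combination of roots, of which there are $d_{\mathcal{F}}^{O(1)}$ per coordinate but we only need consistency across coordinates — isolates the unique point, representing it in $k[x_j]/(m_j)$ where $m_j$ is the minimal polynomial of $a_j$. The total cost is dominated by the computation of $V_{\mathcal{F}, d_{\mathcal{F}}}$, namely $n^{O(d_{\mathcal{F}})}$. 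The main obstacle I anticipate is the second step: carefully justifying that the \emph{last} fall degree (rather than a larger "degree of regularity" bound) suffices to produce the univariate eliminants inside $V_{\mathcal{F}, d_{\mathcal{F}}}$ without blowing up the degree — this is where the precise definition of $d_{\mathcal{F}}$ and the radicality of $I$ must be used in tandem, and it is the only place where a real argument, as opposed to routine complexity bookkeeping, is needed.
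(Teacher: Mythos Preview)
Your approach is correct in outline but takes a longer route than the paper and contains one step whose complexity bookkeeping would not go through as written. The paper's proof is essentially two lines: since $I$ is radical with a single zero $(\alpha_1,\ldots,\alpha_n)$ (necessarily in $k^n$, being Galois-invariant), $I$ is exactly the maximal ideal $(x_1-\alpha_1,\ldots,x_n-\alpha_n)$; hence each $x_i-\alpha_i$ lies in $I$, has degree $1$, and by the defining property of the last fall degree already lies in $V_{\mathcal{F},d_{\mathcal{F}}}$. One then simply reads off the $\alpha_i$ from the degree-$1$ part of a basis of $V_{\mathcal{F},d_{\mathcal{F}}}$, computed in $n^{O(d_{\mathcal{F}})}$ time.

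Your elimination-ideal argument eventually lands in the same place --- the generator of $I\cap k[x_j]$ \emph{is} $x_j-a_j$, of degree $1$ --- but you do not exploit this, and instead speak of factoring a $q_j$ of degree $O(d_{\mathcal{F}})$ and then ``testing each combination of roots, of which there are $d_{\mathcal{F}}^{O(1)}$ per coordinate.'' Taken literally that fallback costs $d_{\mathcal{F}}^{O(n)}$, not $n^{O(d_{\mathcal{F}})}$, so as stated it is a gap. The gap evaporates once you observe that the minimal-degree nonzero element of $V_{\mathcal{F},d_{\mathcal{F}}}\cap k[x_j]$ is already linear, which is precisely the paper's shortcut; your ``main obstacle'' (why the eliminant appears in $V_{\mathcal{F},d_{\mathcal{F}}}$) then reduces to the trivial remark that a degree-$1$ ideal element lies in $V_{\mathcal{F},\max(1,d_{\mathcal{F}})}=V_{\mathcal{F},d_{\mathcal{F}}}$.
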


\begin{proof}
For $i > 0$, a linear basis for $V_{\mathcal{F},i}$ can be constructed in time $n^{O(i)}$ (see \cite{HKY,HKYY}).  If the ideal generated by $\mathcal{F}$ is radical and $Z(\mathcal{F})$ has a unique point $(\alpha_1,\ldots, \alpha_n)\in\bar{k}^n$, then $x_i - \alpha_i \in V_{\mathcal{F}, d_{\mathcal{F}} }$ for $i=1,\ldots, n$.  It follows that $(\alpha_1,\ldots,\alpha_n)$ can be found in $n^{O(d_{\mathcal{F}})}$ time.
\end{proof}

To prove the main results on last fall degrees and the computational complexity of  solving semi-local polynomial systems, we need some technical preparation.

First, let us recall the well-known {\em Shape Lemma} (Theorem 3.7.25 of \cite{KR}):

\begin{lemma}
Let $k$ be a field and let $\cF=\{f_1,\ldots,f_m\}\subset k[x_1,\ldots,x_n]$.  Let $I$ be the ideal generated by $\cF$.  Suppose $Z(\cF)$ is finite, and suppose for any two distinct $(a_1,\ldots,a_n), (b_1,\ldots,b_n)\in Z(\cF)$ we have $a_n\neq b_n$.  Then the radical ideal of $I$ has a reduced lexicographic Gr\"{o}bner basis of the form
$$\{ g_n(x_n), x_i - g_i (x_n): i=1,\ldots, n-1\}$$
with $g_i$ univariate for all $i$ and $\deg g_i < \deg g_n = |Z(\cF)|$ for $i=1,\ldots, n-1$.
\end{lemma}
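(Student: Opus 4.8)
The plan is to read everything off from the coordinate ring $A = k[x_1,\ldots,x_n]/\sqrt{I}$. Since $Z(\cF)$ is finite, $I$ is zero-dimensional and, by the Nullstellensatz, $\sqrt{I}=I(Z(\cF))$, so $A$ is a finite reduced $k$-algebra whose $\bar{k}$-points correspond exactly to the points of $Z(\cF)$; the hypothesis says precisely that the last-coordinate projection $\pi\colon Z(\cF)\to\bar{k}$ is injective. Put $s=|Z(\cF)|$ and let $\beta_1,\ldots,\beta_s$ be the distinct values that $x_n$ takes on $Z(\cF)$. First I would produce $g_n$: take it to be the monic generator of the (nonzero, principal) elimination ideal $\sqrt{I}\cap k[x_n]$. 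A polynomial of $k[x_n]$ lies in this ideal iff it vanishes at every $\beta_j$, so $g_n=\prod_{j=1}^s (x_n-\beta_j)$; this lies in $k[x_n]$ because $\{\beta_1,\ldots,\beta_s\}=\pi(Z(\cF))$ is stable under $\mathrm{Gal}(\bar{k}/k)$, it is squarefree, and $\deg g_n=s=|Z(\cF)|$ by the injectivity hypothesis.

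Next I would produce the $g_i$ for $i<n$. The key point is that $A=k[\bar{x}_n]$, i.e.\ the image of $x_n$ generates $A$ as a $k$-algebra, equivalently every $\bar{x}_i$ is a polynomial in $\bar{x}_n$. To get this, compare dimensions: $k[\bar{x}_n]\cong k[x_n]/(g_n)$ has $k$-dimension $s$, while $A$ is a finite reduced $k$-algebra with exactly $s=|Z(\cF)|$ geometric points, hence $\dim_k A = s$ as well (this uses that $A$ is separable over $k$, automatic in our setting since the base field is finite, hence perfect); since $k[\bar{x}_n]\subseteq A$ and the dimensions coincide, $k[\bar{x}_n]=A$. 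Therefore $\bar{x}_i=g_i(\bar{x}_n)$ for some $g_i\in k[x_n]$, which after reduction modulo $g_n$ we may take with $\deg g_i<\deg g_n = s$; then $x_i-g_i(x_n)\in\sqrt{I}$. Concretely $g_i$ is the Lagrange interpolant through the $s$ pairs $(\beta_j,a)$, where $a$ is the $i$-th coordinate of the unique point of $Z(\cF)$ lying over $\beta_j$; this interpolation data is Galois-stable, which is an alternative way to see $g_i\in k[x_n]$.

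It remains to check that $G=\{g_n(x_n)\}\cup\{x_i-g_i(x_n):i=1,\ldots,n-1\}$ is the reduced lexicographic Gr\"{o}bner basis of $\sqrt{I}$ for the order $x_1>\cdots>x_n$. Each element of $G$ lies in $\sqrt{I}$; conversely the ideal $J$ they generate satisfies $k[x_1,\ldots,x_n]/J\cong k[x_n]/(g_n)$, which is reduced, so $J$ is radical, and a direct substitution computation gives $Z(J)=Z(\cF)$, whence $J=\sqrt{I}$ by the Nullstellensatz. With respect to the chosen order the leading terms of $G$ are $x_1,\ldots,x_{n-1},x_n^s$, and the monomial ideal $L$ they generate has $\dim_k k[x_1,\ldots,x_n]/L = s$; since $L$ is contained in the leading-term ideal of $\sqrt{I}$, whose quotient also has $k$-dimension $\dim_k A = s$, the two monomial ideals coincide and $G$ is a Gr\"{o}bner basis. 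It is visibly reduced (every element is monic, and no term of any element is divisible by the leading term of another), and $\deg g_i<\deg g_n=|Z(\cF)|$ by construction; uniqueness of the reduced Gr\"{o}bner basis then gives exactly the stated form.

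The step I expect to be the main obstacle is the rationality-and-dimension bookkeeping of the second paragraph: pinning down that $\deg g_n$ equals $|Z(\cF)|$ itself, rather than the a priori larger $\dim_k A$, and that the interpolants $g_i$ have coefficients in $k$. This is precisely where radicality (so $A$ is reduced) and perfectness of the base field (so $A$ is moreover separable over $k$) enter; the rest — the Gr\"{o}bner-basis verification and the reducedness check — is routine once the $g_i$ are in hand. For a fully general field one would invoke Theorem 3.7.25 of \cite{KR} directly.
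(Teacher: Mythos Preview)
The paper does not give a proof of this lemma at all: it is introduced as a recall of the well-known Shape Lemma and simply cited as Theorem~3.7.25 of \cite{KR}. So there is no ``paper's proof'' to compare your argument against beyond that reference.

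Your sketch is a correct and self-contained proof in the perfect-field case (in particular for finite fields, which is the only setting the paper ever applies it in). The dimension count $\dim_k A=|Z(\cF)|$ is exactly where perfectness is used, as you identify: over a non-perfect $k$ the reduced algebra $A$ need not be \'etale, and indeed the lemma as literally stated can fail (e.g.\ $k=\F_p(t)$, $I=(x_1^p-t)$ has $|Z(\cF)|=1$ but $\sqrt I=I$ has no linear generator). Your closing remark that the fully general statement should be delegated to \cite{KR} is therefore the right call; the rest of your argument---Galois stability of the $\beta_j$, the Lagrange-interpolation description of the $g_i$, and the Gr\"obner verification via matching leading-term ideals---is clean and would make a fine replacement for the bare citation in the finite-field context of the paper.
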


\begin{lemma}
\label{x-g}
Let $k$ be a finite field.  Suppose $\cF$ is a finite set in $k[x_1,\ldots,x_n]$ and the zero set $Z(\cF)$ is finite with $| Z(\cF) | = s$.  If $K$ is a finite field containing $k$ and $\left(\begin{array}{c} s\\2 \end{array}\right) < |K |$, then there exists $t = \sum_{i=1}^n a_i x_i \in K [x_1,\ldots,x_n]$ with $a_i \in K$ for $i=1,\ldots,n$, such that  for all $\alpha,\beta\in Z(\cF)$, if $\alpha\neq \beta$ then $t(\alpha)\neq t(\beta)$.  Moreover suppose without loss of generality $a_n\neq 0$, then there exist $g_1,\ldots, g_n \in K[t]$ with $\deg g_i < \deg g_n = s$ for $i=1,\ldots,n-1$, such that $Z(\cF)=Z(\cG)$ where $\cG=\{ g_n(t), x_i - g_i (t): i=1,\ldots,n-1\}$. In particular if the ideal generated by $\cF$ is radical then $\cG$ generates the same ideal.

\end{lemma}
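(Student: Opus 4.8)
The plan is to produce the linear form $t$ by a counting argument and then invoke the Shape Lemma. First I would set up the obstruction to be avoided: for each unordered pair $\{\alpha,\beta\}$ of distinct points of $Z(\cF)$, the condition $t(\alpha)=t(\beta)$ says that the vector $(a_1,\ldots,a_n)\in K^n$ lies in the hyperplane $H_{\alpha,\beta} = \{\, a : \sum_i a_i(\alpha_i-\beta_i)=0\,\}$, which is a proper $K$-subspace of $K^n$ since $\alpha\neq\beta$. I want a point of $K^n$ avoiding all these hyperplanes. The classical way to count this: restrict attention to points of the form $(1,u,u^2,\ldots,u^{n-1})$ with $u\in K$; such a point lies on $H_{\alpha,\beta}$ iff $u$ is a root of the nonzero degree-$(\le n-1)$ polynomial $\sum_i (\alpha_i-\beta_i)X^{i-1}$, so each pair kills at most... this is not obviously enough. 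The cleaner argument matching the stated hypothesis $\binom{s}{2} < |K|$ is: pick $a_1,\ldots,a_{n-1}\in K$ arbitrarily and let $a_n$ range over $K$; for a pair $\{\alpha,\beta\}$ with $\alpha_n\neq\beta_n$ there is exactly one bad value of $a_n$, and for a pair with $\alpha_n=\beta_n$ we can first ensure (again by a counting step on the earlier coordinates) that $t$ already separates it. To keep the bound exactly $\binom{s}{2}<|K|$ one argues by induction on the coordinate being chosen, or more slickly: the union $\bigcup_{\alpha\neq\beta} H_{\alpha,\beta}$ has at most $\binom{s}{2}\cdot|K|^{n-1}$ elements, and since $\binom{s}{2}<|K|$ this is strictly less than $|K|^n$, so a separating $(a_1,\ldots,a_n)\in K^n$ exists. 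This is the cleanest route and I would use it.

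Having chosen such a $t = \sum_i a_i x_i$ with (after relabeling coordinates, which is harmless) $a_n\neq 0$, I would pass to the new coordinate system $x_1,\ldots,x_{n-1},t$ — i.e. work in $K[x_1,\ldots,x_{n-1},t]$ via the invertible $K$-linear substitution $x_n = a_n^{-1}(t - \sum_{i<n} a_i x_i)$. Under this change of variables $\cF$ becomes a finite set $\cF'\subset K[x_1,\ldots,x_{n-1},t]$ with $Z(\cF')$ finite of size $s$, and by the choice of $t$, distinct points of $Z(\cF')$ have distinct last coordinates (the $t$-coordinate). Now the Shape Lemma applies directly: the radical of the ideal $I'$ generated by $\cF'$ has a reduced lexicographic Gröbner basis $\{g_n(t),\ x_i - g_i(t) : i=1,\ldots,n-1\}$ with $g_i\in K[t]$, $\deg g_i < \deg g_n = |Z(\cF')| = s$. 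Setting $\cG = \{g_n(t),\ x_i - g_i(t): i=1,\ldots,n-1\}$ and translating back through the linear substitution (so that $g_n(t), x_i-g_i(t)$ are viewed as elements of $K[x_1,\ldots,x_n]$), we get $Z(\cG) = Z(\sqrt{I'}) = Z(I') = Z(\cF')$, which equals $Z(\cF)$ under the coordinate identification. This establishes $Z(\cF)=Z(\cG)$.

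For the final sentence: if the ideal $I$ generated by $\cF$ is radical, then so is $I'$ (a $K$-linear change of variables is a ring automorphism of the polynomial ring, hence preserves radicality — and extending scalars from $k$ to $K$ preserves radicality of an ideal when $K/k$ is separable, which holds for finite fields). Then the Shape Lemma's basis generates $\sqrt{I'} = I'$ exactly, so $\cG$ generates $I'$, i.e. the same ideal as $\cF$ (up to the coordinate change). I expect the main obstacle to be bookkeeping rather than depth: being careful that the hypothesis $\binom{s}{2}<|K|$ really suffices for the hyperplane-avoidance count (the union bound $\binom{s}{2}|K|^{n-1} < |K|^n$ is what does it, and it is tight in the right way), and being careful about the passage between the $k$-structure and the $K$-structure — the separating form lives over $K$, not $k$, so $\cG$ and the ideal it generates are objects over $K$, and the radicality claim implicitly uses separability of $K/k$. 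A secondary point worth stating explicitly is why $Z(\cF')$ still has exactly $s$ points and why its points (over $\bar k = \bar K$) are permuted compatibly — this is immediate since the substitution is a bijection on $\bar k^n$.
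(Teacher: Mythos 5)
Your proposal is correct and follows essentially the same route as the paper: the separating linear form is found by the same union-bound count (each bad pair confines $(a_1,\ldots,a_n)$ to a hyperplane of cardinality $|K|^{n-1}$, and $\binom{s}{2}<|K|$ makes the union proper), and the shape of $\cG$ then comes from the Shape Lemma applied in the coordinates $x_1,\ldots,x_{n-1},t$. Your explicit remark that radicality survives the extension of scalars from $k$ to $K$ (via perfectness of finite fields) is a point the paper leaves implicit, but it is not a different argument.
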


\begin{proof}  For $\gamma=(\gamma_i)_{i=1}^n \in K^n$ let $t_{\gamma}=\sum_{i=1}^n \gamma_i x_i$.    For each pair $\alpha,\beta\in Z(\cF)$ with $\beta\neq \alpha$, $\{\gamma\in K^n: t_{\gamma} (\beta) = t_{\gamma} (\alpha) \}$ is a $K$-vector space of dimension $n-1$, with cardinality $q^{n-1}$ where $q=|K|$.  Therefore if $\left(\begin{array}{c} s\\2 \end{array}\right) < |K |$, then there exists $t = \sum_{i=1}^n a_i x_i \in K [x_1,\ldots,x_n]$ with $a_i \in K$ for $i=1,\ldots,n$, such that for all $\alpha,\beta\in Z(\cF)$, if $\alpha\neq \beta$ then $t(\alpha)\neq t(\beta)$.

The rest of the lemma follows by applying the Shape Lemma to $k[x_1,\ldots,x_{n-1}, t]$, noting that $k[x_1,\ldots, x_{n-1}, t]=k[x_1,\ldots, x_n]$.

\end{proof}

\begin{lemma}
\label{red}
Suppose $\mathcal{F}$ is a finite set of polynomials in  $R=k[x_1,\ldots,x_n]$.  Suppose $f\in k[x_1]$, $\deg f \le d$, and $f \equiv_d 0 \pmod{\cF}$.  Suppose moreover
$x_i \equiv_d h_i  \pmod{\cF}$ where $ h_i \in k[x_1]$,  and $\deg h_i < \deg f$ for $i=2,\ldots,n$.  Then for all monomials $m\in R$, $m \equiv_D h\pmod{\cF}$ for some $h\in k[x_1]$ with $\deg h < \deg f$ and $D=\max(\deg m, 2d)$.
\end{lemma}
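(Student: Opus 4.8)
The plan is to reduce everything to two auxiliary claims, using throughout the standard properties of the spaces $V_{\mathcal{F},i}$ recalled above (each is a $k$-vector space; $V_{\mathcal{F},i}\subseteq V_{\mathcal{F},j}$ for $i\le j$, which follows from those properties; and $g\in V_{\mathcal{F},i}$, $\deg(hg)\le i$ imply $hg\in V_{\mathcal{F},i}$). In particular the hypotheses $f\equiv_d 0$ and $x_i\equiv_d h_i$ continue to hold modulo $\mathcal{F}$ at every level $\ge d$, which is what lets the argument operate at a higher working degree. Write $e=\deg f$. The case $e=0$ is immediate (then every $h_i=0$, $1\in V_{\mathcal{F},d}$, and any monomial $m$ lies in $V_{\mathcal{F},\max(\deg m,2d)}$), so I would assume $e\ge 1$. \textbf{Claim A} (univariate reduction): every $u\in k[x_1]$ with $\deg u\le N$ satisfies $u\equiv_{\max(N,2d)}r$ for some $r\in k[x_1]$ with $\deg r<e$. \textbf{Claim B} is the statement of the lemma, which I would deduce from Claim A by induction on $\deg m$.

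For Claim A, since $V_{\mathcal{F},\cdot}$ is a monotone family of vector spaces it suffices to prove $x_1^j\equiv_{\max(j,2d)}r_j$ with $\deg r_j<e$, by strong induction on $j$. For $j<e$ take $r_j=x_1^j$. For $j\ge e$, write $f=c_ex_1^e+f_0$ with $c_e\in k^{\times}$, $\deg f_0<e$, and set $L=\max(j,2d)$. First I would note $f\in V_{\mathcal{F},d}\subseteq V_{\mathcal{F},L}$ and $\deg(x_1^{\,j-e}f)=j\le L$, so $x_1^{\,j-e}f\in V_{\mathcal{F},L}$; since $x_1^{\,j-e}f=c_ex_1^j+x_1^{\,j-e}f_0$ with $\deg(x_1^{\,j-e}f_0)\le j-1$, this gives $x_1^j\equiv_L -c_e^{-1}x_1^{\,j-e}f_0$, a univariate polynomial of degree $\le j-1$. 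Applying the induction hypothesis monomial-by-monomial (and monotonicity) then reduces it further to the desired $r_j$.

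For Claim B I would induct on $N=\deg m$, the case $N=0$ being trivial. If $N\ge 1$, pick a variable $x_i\mid m$ and write $m=x_im'$ with $\deg m'=N-1$; put $L=\max(N,2d)$. By induction $m'\equiv_L r'$ with $r'\in k[x_1]$, $\deg r'<e$, and since $\deg\bigl(x_i(m'-r')\bigr)\le\max(N,e)\le L$ one may multiply through by $x_i$ to get $m\equiv_L x_ir'$. Next I would produce $x_i\equiv_d h_i$ with $h_i\in k[x_1]$, $\deg h_i<e$: this is a hypothesis for $i\ge 2$, and for $i=1$ it is either trivial ($h_1=x_1$ if $e\ge 2$) or follows from $f\equiv_d 0$ (if $e=1$), or simply from Claim A. Because $\deg\bigl(r'(x_i-h_i)\bigr)\le\max(e,2e-2)\le 2d\le L$, multiplying $x_i-h_i\in V_{\mathcal{F},d}\subseteq V_{\mathcal{F},L}$ by $r'$ is legitimate and yields $x_ir'\equiv_L r'h_i$, a univariate polynomial of degree $\le 2d$. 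Finally Claim A reduces $r'h_i\equiv_{2d}h$ with $h\in k[x_1]$, $\deg h<e$, and since $2d\le L$ the chain $m\equiv_L x_ir'\equiv_L r'h_i\equiv_L h$ closes the induction.

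The hard part is entirely the degree bookkeeping: at every multiplication step one must check that the product does not exceed the current working level. The bound is $\max(\deg m,2d)$ rather than $\max(\deg m,d)$ precisely because the step $x_ir'\equiv_L r'h_i$ multiplies two univariate polynomials each of degree $<e\le d$, producing degree up to $2e-2<2d$; the factor-of-two slack, together with the repeated use of Claim A to ``renormalize'' intermediate univariate polynomials back below degree $e$, is what makes the induction go through.
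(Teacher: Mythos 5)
Your proof is correct, but it follows a genuinely different route from the paper's. The paper proves the lemma by a balanced binary splitting: it writes $m=m_1m_2$ with $\deg m_1=\lfloor \deg m/2\rfloor$, reduces each factor by induction to a univariate $g_i$ of degree $<\deg f$, and then controls the error via the identity $m-g_1g_2=\alpha_1\alpha_2+\alpha_1g_2+\alpha_2g_1$ with $\alpha_i=m_i-g_i$, splitting into the cases $\deg m\le 2d$ and $\deg m>2d$ to check that every product stays within the working degree; the balance of the split is what guarantees $\deg\alpha_1+\deg\alpha_2\le\deg m$ in the second case. You instead peel off one variable at a time, $m=x_im'$, and compensate for the resulting imbalance by doing the substitution in two stages (first multiply the reduced $r'$ by $x_i$, then replace $x_i$ by $h_i$ against the already-renormalized $r'$, so the product degree is at most $2e-2\le 2d$ rather than involving the full degree of $m'$), together with an explicit univariate renormalization (your Claim A) that the paper leaves implicit in the step $g_1g_2\equiv_{2d}g\pmod f$. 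Your degree bookkeeping checks out at every multiplication, you reach the same bound $D=\max(\deg m,2d)$, and you additionally handle the degenerate cases $\deg f\in\{0,1\}$ and $i=1$ explicitly. The paper's balanced recursion has the side benefit of expressing $m$ through monomials of half its degree (logarithmic recursion depth), which is in the spirit of how these $V_{\mathcal F,i}$ derivations are used algorithmically elsewhere in the paper, whereas your linear induction is more elementary and arguably easier to verify; for the existence statement of the lemma the two are equally good.
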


\begin{proof}  We prove by induction on the degree of $m$.  The case $\deg m = 1$ is trivial.  Inductively for $m$ of degree greater than 1, write $m = m_1 m_2$ where $m_1$ is a monomial of degree $\lfloor \frac{\deg m}{2}\rfloor$. Let $d_1 = \deg m_1$ and $d_2 = \deg m_2$.  Then by induction $m_i\equiv_{\Delta_i} g_i (x_1)\pmod{\cF}$ where $\deg g_i < \deg f$ and $\Delta_i = \max (d_i, 2d)$ for $i=1,2$.  Let $\alpha_i = m_i - g_i$ for $i=1,2$.  Then $\deg \alpha_i \le \max (d_i, d)$ for $i=1,2$.
We have
\[ m - g_1 g_2 = \alpha_1 \alpha_2 + \alpha_1 g_2 +\alpha_2 g_1.\]
We have \[g_1 g_2 \equiv_{2d} g \pmod{f}\] where $g\in k[x_1]$ and $\deg g < \deg f \le d$.
Since $f\equiv_{d} 0\pmod{\cF}$, it follows that $g_1 g_2\equiv_{2d} g \pmod{\cF}$.

\ \\Case 1: $\deg m \le 2d$ then $d_i \le d$ for $i=1,2$.  In this case $ \alpha_1 \alpha_2$, $\alpha_1 g_2$, and $\alpha_2 g_1$ are all of degree no greater than $2d$.
Since $\Delta_i = 2d$, $\alpha_1,\alpha_2 \in V_{\cF, 2d}$, so $\alpha_1 \alpha_2 + \alpha_1 g_2 +\alpha_2 g_1\in V_{\cF, 2d}$.
We get \[ m=m_1 m_2 \equiv_{2d} g_1 g_2 \equiv_{2d} g \pmod{\cF}.\]

\ \\Case 2: $\deg m > 2d$.  In this case $d_i \ge d$ and $\deg \alpha_i \le \max(d_i, d) = d_i$ for $i=1,2$. So $ \alpha_1 \alpha_2$, $\alpha_1 g_2$, and $\alpha_2 g_1$ are all of degree no greater than $d_1 + d_2 = \deg m$.  Since $\Delta_i < \deg m$, $\alpha_i\in V_{\cF,\deg m}$ for $i=1,2$.  It follows that
 $ \alpha_1 \alpha_2$, $\alpha_1 g_2$, and $\alpha_2 g_1$ are all in $V_{\cF,\deg m}$.  Hence
 \[ m\equiv_{\deg m} g_1g_2 \equiv_{2d} g \pmod{\cF}\]
where $g\in k[x_1]$ and $\deg g < \deg f \le d$.
Note that in this case $\deg m = \max(\deg m, 2d)$.

In both cases we conclude that $m\equiv_D g \pmod{\cF}$ where $D=\max(\deg m, 2d)$, $g\in k[x_1]$ and $\deg g < \deg f \le d$.
\end{proof}

The following lemma slightly generalizes Lemma 4.4 in \cite{H}.
\begin{lemma}
\label{dp}
Let $k$ be a finite field with $|k| = q=p^m$.  Let $f\in K[x_0]$ with $d=\deg f$ where $K$ is a finite field containing $k$, and $\cF = \{ f, x_0^p-x_1,\ldots, x_{m-1}^p - x_0\}$.
Suppose $g=gcd ( f, x_0^q-x_0 )\in K [ x_0]$.  Then $g\equiv_{dp} 0 \pmod{\cF}$ and $x_i \equiv_{pd} h_i \pmod{\cF}$ with $h_i\in K[x_0]$ and  $\deg h_i < \deg g \le d$ for $i=1,\ldots,m-1$.
\end{lemma}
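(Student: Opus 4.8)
The plan is to work with the system $\cF = \{f, x_0^p - x_1, x_1^p - x_2, \ldots, x_{m-1}^p - x_0\}$ and exploit the chain of equations to manufacture, inside $V_{\cF, pd}$, the polynomial $x_0^q - x_0$ (more precisely, $f$ composed with Frobenius) and then reduce $f$ against it. First I would observe that, working modulo the equations $x_i^p - x_{i+1}$ (indices mod $m$), the monomial $x_0^{p}$ is congruent to $x_1$ at degree $\max(p, pd)=pd$ (since $p\le d$, or rather we only need $p\le pd$, which is automatic), and iterating, $x_0^{p^j}$ is congruent at degree $pd$ to $x_j$ for $j=1,\ldots,m-1$, and $x_0^{p^m}=x_0^q$ is congruent at degree $pd$ to $x_0$ itself. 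The key point controlling the degree is that each step replaces a $p$-th power $x_i^p$ by the single variable $x_{i+1}$, so multiplying $f$ (or its images) by such a power never pushes the working degree above $pd$: one shows by induction on $j$ that $f(x_0)^{[j]} := \sum_\ell a_\ell x_j^{\ell}$ (the polynomial obtained from $f$ by substituting $x_j$ for $x_0$, where $f=\sum_\ell a_\ell x_0^\ell$) satisfies $f(x_0)^{[j]} \equiv_{pd} f(x_0)^{[0]} = f \pmod{\cF}$, because the difference telescopes through multiples of the generators $x_i^p - x_{i+1}$, each multiplied by a polynomial of degree $< d$, keeping all intermediate degrees $\le pd$.

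Next I would use this to produce $x_0^q - x_0$, or rather the composition $\tilde f := \sum_\ell a_\ell^{?}\, x_0^{\ell q}$ — here one must be slightly careful since $f\in K[x_0]$ and the coefficients are not fixed by Frobenius. Instead the cleaner route is: $f\in V_{\cF,d}\subseteq V_{\cF,pd}$, so $f\equiv_{pd} 0$; applying the relation $x_0^{p^m}\equiv_{pd} x_0$ we may in particular replace the top-degree behaviour of $f$ to get that the greatest common divisor $g=\gcd(f, x_0^q - x_0)$ lies in $V_{\cF,pd}$. Concretely, write $x_0^q - x_0 = u f + v g$ with $u,v\in K[x_0]$ of controlled degree from the Euclidean algorithm ($\deg g \le d$, $\deg u \le q - d$ is the potentially large one). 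To avoid invoking $x_0^q-x_0$ as an honest element of $V_{\cF,pd}$ with its prohibitively large degree, I would instead directly exhibit $g$: using the congruences $x_0^{p^j}\equiv_{pd} x_j$ one reduces $x_0^q - x_0$ term by term, and since $g$ is a $K$-linear combination (via the extended Euclidean identity) of $f$ and $x_0^q-x_0$ with the understanding that $f\equiv_{pd}0$, it suffices to check $x_0^q - x_0 \equiv_{pd} (\text{something of degree} < d)$. The honest statement is: the reduction of $x_0^q$ modulo $\cF$ can be carried out through the chain $x_0^q = (x_0^{p^{m-1}})^p \equiv_{pd} x_{m-1}^p \equiv_{pd} x_0$, each step legal at degree $\le pd$; hence $x_0^q - x_0 \equiv_{pd} 0 \pmod{\cF \setminus \{f\}}$, and combining with $f\equiv_{pd}0$ gives $g = \gcd(f,x_0^q-x_0)\equiv_{pd} 0$ after one final Euclidean reduction of $f$ against $x_0^q - x_0$, which only involves polynomials of degree $\le \max(d, \text{intermediate}) \le pd$ once the $x_0^q-x_0$ side has been collapsed.

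Finally, for the second assertion, since $Z(\cF)$ consists of points where $x_0^q - x_0$ and $f$ both vanish — i.e. $x_0$ ranges over the roots of $g$ in $k$, and the remaining coordinates are determined by $x_j = x_0^{p^j}$ — the Shape Lemma (or Lemma \ref{x-g}) applied to $K[x_0,\ldots,x_{m-1}]$ with the separating linear form $x_0$ gives generators $g(x_0)$, $x_j - h_j(x_0)$ with $h_j\in K[x_0]$, $\deg h_j < \deg g \le d$. But I can also get $x_j\equiv_{pd} h_j$ more directly: from $x_j \equiv_{pd} x_0^{p^j} \pmod{\cF}$ and the fact that $g\equiv_{pd}0$, Lemma \ref{red} applies with $f$ replaced by $g$ (note $\deg g \le d$, so $2\deg g \le 2d \le pd$ for $p\ge 2$), showing every monomial in $x_0$ — in particular $x_0^{p^j}$ — is congruent mod $\cF$ at degree $\le pd$ to a polynomial in $x_0$ of degree $< \deg g$; call it $h_j$. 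Thus $x_j \equiv_{pd} h_j \pmod{\cF}$ as required. The main obstacle is the bookkeeping in the middle paragraph: one must avoid ever writing $x_0^q - x_0$ as a genuine degree-$q$ member of some $V_{\cF,i}$ with $i<q$ (that is false), and instead only ever manipulate the congruence through the Frobenius chain where each individual rewriting $x_i^p\rightsquigarrow x_{i+1}$ is degree-legal at $pd$; making this telescoping rigorous, with the coefficients of $f$ (which are not Frobenius-fixed) carried along correctly, is the delicate point — essentially the content of Lemma 4.4 of \cite{H} being generalized here.
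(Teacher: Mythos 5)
There is a genuine gap, and it sits exactly where you flagged "the delicate point": the congruences $x_0^{p^j}\equiv_{pd} x_j \pmod{\cF}$ and $x_0^q\equiv_{pd}x_0\pmod{\cF}$ that your argument repeatedly invokes are false in general. To pass from $x_0^{p^{j-1}}\equiv x_{j-1}$ to $x_0^{p^j}\equiv x_{j-1}^p$ you must raise the previous congruence to the $p$-th power, and the witnessing element $(x_0^{p^{j-1}}-x_{j-1})^p=x_0^{p^j}-x_{j-1}^p$ has degree $p^j$; so the rewriting $x_i^p\rightsquigarrow x_{i+1}$ is only "degree-legal" at degree $p^j$, which grows to $q$, not $pd$. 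The same defect invalidates the claimed chain $x_0^q=(x_0^{p^{m-1}})^p\equiv_{pd}x_{m-1}^p\equiv_{pd}x_0$, the "term by term" reduction of $x_0^q-x_0$, and the final step where you derive $x_j\equiv_{pd}h_j$ from "$x_j\equiv_{pd}x_0^{p^j}$". Separately, the telescoping claim $f^{[j]}:=\sum_\ell a_\ell x_j^\ell\equiv_{pd}f\pmod{\cF}$ is false for a more basic reason: since the $a_\ell$ are not Frobenius-fixed, $f^{[j]}$ need not even lie in the ideal (test $f=x_0-a$ with $a\notin\F_p$); and even the Frobenius-twisted version $\sum_\ell a_\ell^{p^j}x_j^\ell$ only comes out at degree $dp^j$ unless one reduces along the way.

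The missing mechanism — which is the actual content of the lemma and of the paper's proof — is to interleave each Frobenius step with reduction modulo $f$ so that one only ever exponentiates polynomials of degree $<d$. Concretely, one maintains $g_j\in K[x_0]$ with $\deg g_j<d$ and $x_j\equiv_{pd}g_j\pmod{\cF}$: given this, $x_j^p-g_j^p=(x_j-g_j)^p$ has degree $<pd$ and is a multiple of $x_j-g_j\in V_{\cF,pd}$, so $x_{j+1}\equiv_p x_j^p\equiv_{pd}g_j^p$, and one then reduces $g_j^p$ (degree $<pd$) modulo $f$ inside $V_{\cF,pd}$ to obtain $g_{j+1}$ of degree $<d$. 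Closing the cycle gives $x_0\equiv_{pd}g_m$ where, by construction, $g_m\equiv x_0^q\pmod{f}$ as a plain ideal congruence in $K[x_0]$; hence $h:=g_m-x_0\equiv_{pd}0\pmod{\cF}$, $\gcd(f,h)=\gcd(f,x_0^q-x_0)=g$, and $g$ is a $K[x_0]$-combination of $f$ and $h$ of degree at most $2d\le pd$, whence $g\equiv_{pd}0$. The second assertion is then obtained not via Lemma~\ref{red} applied to $x_0^{p^j}$ but by rerunning the same chain with $g$ in place of $f$, producing $h_j$ of degree $<\deg g$ with $x_j\equiv_{pd}h_j$. Your outline has the right overall shape (chain, gcd, shape-type generators), but without this reduce-then-exponentiate loop the degree bound $pd$ is not established.
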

\begin{proof}  We have $x_0^p \equiv_{\max\{p,d\}} g_1 \pmod{f}$ for some $g_1\in K[x_0]$ with $\deg g_1 < d$.
We have $g_1^p \equiv_{pd} g_2 \pmod{f}$ for some $g_2\in K[x_0]$ with $\deg g_2 < d$.  Inductively we have
$g_i^p\equiv_{pd} g_{i+1} \pmod{f}$ with $g_{i+1}\in K[x_0]$ and $\deg g_{i+1} < d$, for $i=1,\ldots, m-2$.
In particular it follows that $x_0^q - x_0 \equiv h \pmod{f}$ where $h=g_m-x_0$, so $gcd (x_0^q-x_0,f)=gcd(f,h)$.

We have
$$x_1\equiv_p x_0^p \equiv_{\max\{p,d\}} g_1 \pmod{\cF},$$ and
inductively,  from $x_i \equiv_{pd} g_i \pmod{\cF}$ and noting that $x_i^p-g_i^p$ has degree less than $pd$, we have $$x_{i+1} \equiv_p x_i^p \equiv_{pd} (g_i)^p\equiv_{pd} g_{i+1}\pmod{\cF},$$
for $i=1,\ldots, m-2$.
Finally, $x_0 \equiv_p x_{m-1}^p \equiv_{pd} g_{m-1}^p \equiv_{pd} g_m\pmod{\cF}$.
It follows that $h\equiv_{pd} 0 \pmod{\cF}$, consequently $gcd(h,f) \equiv_{pd} 0 \pmod{\cF}$.  Therefore
$g=gcd(x_0^q-x_0, f)\equiv_{pd} 0 \pmod{\cF}$.

Repeat the earlier argument now with $g$ in place of $f$, we have
$x_0^p \equiv_{\max\{p,\deg g\}} h_1 \pmod{g}$ for some $h_1\in k[x_0]$ with $\deg h_1 < \deg g$.
We have $x_1\equiv_p x_0^p\equiv_{\max(p,\deg g)} h_1 \pmod{\cF}$, and since $x_1^p-h_1^p$ has degree less than $pd$, we have
$$x_2 \equiv_p x_1^p \equiv_{pd} h_1^p \equiv_{pd} h_2 \pmod{\cF}$$ for some $h_2\in k[x_0]$ with $\deg h_2 < \deg g$ and $h_1^p \equiv_{pd} h_2 \pmod{g}$.  Inductively we have
$$x_{i+1}\equiv_p x_i^p \equiv_{pd} h_i^p\equiv_{pd} h_{i+1} \pmod{\cF}$$ with $h_{i+1}\in k[x_0]$, $h_i^p\equiv_{pd} h_{i+1}\pmod{g}$ and $\deg h_{i+1} < \deg g$, for $i=1,\ldots, m-2$.
\end{proof}

We will also need the following two results from \cite{H}.
\begin{lemma}
\label{dFpm}
If $\lambda\in Gl_n (k)$ then
$d_{\cF\cup\cE'} = d_{\lambda^*{\cF}\cup\cE'}$ where $\lambda^*(\cF)=\{ F\circ\lambda: F\in\cF\}$.
\end{lemma}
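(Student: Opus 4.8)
The plan is to deduce the equality from the two invariances of the last fall degree recalled earlier (see \cite{HKY,HKYY}) — invariance under an invertible linear change of variables and under an invertible linear change of equations — by exhibiting a single linear substitution of variables, acting on the whole polynomial ring in the variables $x_{il}$ ($1\le i\le n$, $0\le l\le m-1$, where $x_i=x_{i0}$ and $|k|=p^m$), that carries $\cF\cup\cE'$ to $\lambda^*\cF\cup(\mu_0\circ\cE')$ for some $\mu_0\in Gl_{nm}(k)$. The point is that $\lambda$ itself acts only on the level-$0$ variables $x_{10},\dots,x_{n0}$, so it does not preserve $\cE'$; one must extend it through the Frobenius chains in $\cE'$ in a compatible way.

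Write $\lambda=(\alpha_{ij})_{1\le i,j\le n}\in Gl_n(k)$, so that $\alpha_{ij}^{p^m}=\alpha_{ij}$ for all $i,j$. I would define $\tilde\lambda$ by
\[
\tilde\lambda(x_{il})=\sum_{j=1}^{n}\alpha_{ij}^{p^{l}}\,x_{jl}\qquad (1\le i\le n,\ 0\le l\le m-1),
\]
and first check that this is an invertible $k$-linear substitution: $\tilde\lambda$ is block diagonal for the grading of the variables by the level $l$, and its $l$-th block is the entrywise $p^{l}$-th power of $\lambda$, with determinant $\det(\lambda)^{p^{l}}\neq 0$ and entries again in $k$. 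As an invertible linear change of variables $\tilde\lambda$ preserves degrees. Since each element of $\cF$ is a polynomial in $x_1=x_{10},\dots,x_n=x_{n0}$ only and $\tilde\lambda(x_{i0})=\sum_j\alpha_{ij}x_{j0}$, we have $\tilde\lambda(\cF)=\lambda^*\cF$.

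The only computation with any content is the effect of $\tilde\lambda$ on a chain relation $x_{il}^{p}-x_{i\,l+1}\in\cE'$, with the subscripts of $x$ read modulo $m$. Using $\bigl(\sum_j\alpha_{ij}^{p^{l}}x_{jl}\bigr)^{p}=\sum_j\alpha_{ij}^{p^{l+1}}x_{jl}^{p}$ together with the identity $\alpha_{ij}^{p^{m}}=\alpha_{ij}$ (needed precisely for the wrap-around term $l=m-1$, and available exactly because $\lambda\in Gl_n(k)$) one obtains
\[
\tilde\lambda\bigl(x_{il}^{p}-x_{i\,l+1}\bigr)=\sum_{j=1}^{n}\alpha_{ij}^{p^{l+1}}\bigl(x_{jl}^{p}-x_{j\,l+1}\bigr).
\]
Hence $\tilde\lambda(\cE')=\mu_0\circ\cE'$ where $\mu_0\in Gl_{nm}(k)$ is block diagonal with $l$-th block the entrywise $p^{l+1}$-th power of $\lambda$ (invertible, entries in $k$). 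Assembling: by invariance of the last fall degree under invertible linear change of variables, $d_{\cF\cup\cE'}=d_{\tilde\lambda(\cF\cup\cE')}=d_{\lambda^*\cF\cup(\mu_0\circ\cE')}$; and $\lambda^*\cF\cup(\mu_0\circ\cE')$ is obtained from $\lambda^*\cF\cup\cE'$ by the linear change of equations that is the identity on the $\lambda^*\cF$-block and $\mu_0$ on the $\cE'$-block, so by invariance under linear change of equations $d_{\lambda^*\cF\cup(\mu_0\circ\cE')}=d_{\lambda^*\cF\cup\cE'}$, which is the assertion. The main thing to get right — and essentially the whole idea — is the Frobenius twist $\alpha_{ij}\mapsto\alpha_{ij}^{p^{l}}$ in the definition of $\tilde\lambda$: with the naive extension (identity on the auxiliary variables) the image of a chain relation would fall outside the $k$-span of $\cE'$, and it is exactly the hypothesis $\lambda\in Gl_n(k)$, i.e. $\alpha_{ij}^{p^m}=\alpha_{ij}$, that makes the cyclic relation $x_{i\,m-1}^{p}-x_{i0}$ transform consistently.
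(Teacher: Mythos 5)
Your argument is correct: the Frobenius-twisted extension $\tilde\lambda(x_{il})=\sum_j\alpha_{ij}^{p^l}x_{jl}$ is exactly the right device, since in characteristic $p$ it sends each chain relation $x_{il}^p-x_{i\,l+1}$ to $\sum_j\alpha_{ij}^{p^{l+1}}(x_{jl}^p-x_{j\,l+1})$ (with $\alpha_{ij}^{p^m}=\alpha_{ij}$ handling the wrap-around), so the claim reduces to the two invariances of the last fall degree under invertible linear changes of variables and of equations. This paper does not reprove the lemma but imports it from \cite{H}; your proof is the natural one and matches the mechanism that makes the cited result work, so there is nothing to correct.
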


\begin{lemma}
\label{barf}
Let $f_i (x_i)$ be a monic polynomial in $x_i$ of degree $d_i$ over $\bar{k}$, for $i=1,\ldots,n$.  Let $I\subset \bar{k} [x_1,\ldots,x_n]$ be the ideal generated by $f_i (x_i)$, $i=1,\ldots,n$.  Then for all $f\in\bar{k} [ x_1,\ldots, x_n]$, we have
$f\equiv \bar{f} \pmod{I}$ where $\bar{f}\in \bar{k}[x_1,\ldots, x_n]$ with $\deg_{x_i} \bar{f} < d_i$ for all $i$.
Moreover $f\in I$ if and only if $\bar{f}=0$.
\end{lemma}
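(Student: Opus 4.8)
The plan is to establish the existence of $\bar f$ by iterated polynomial division, and then to pin down uniqueness (equivalently, the ``only if'' direction of the last assertion) by a dimension count on the quotient ring.

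For existence, I would first divide $f$, viewed as a polynomial in $x_1$ with coefficients in $\bar k[x_2,\ldots,x_n]$, by $f_1(x_1)$; since $f_1$ is monic in $x_1$ this division is valid over the coefficient ring $\bar k[x_2,\ldots,x_n]$ and yields $f \equiv f^{(1)} \pmod{I}$ with $\deg_{x_1} f^{(1)} < d_1$. Then I would divide $f^{(1)}$ by $f_2(x_2)$ in the variable $x_2$, obtaining $f^{(1)} \equiv f^{(2)} \pmod{I}$ with $\deg_{x_2} f^{(2)} < d_2$; the crucial point is that, because $f_2$ lies in $\bar k[x_2]$ with coefficients that are constants of $\bar k$, this step does not increase $\deg_{x_1}$, so the bound $\deg_{x_1} f^{(2)} < d_1$ is preserved. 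Iterating through $f_3,\ldots,f_n$ produces $\bar f := f^{(n)}$ with $\deg_{x_i} \bar f < d_i$ for all $i$ and $f \equiv \bar f \pmod I$. This also immediately gives one direction of the final claim: if $\bar f = 0$ then $f = f - \bar f \in I$.

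For the converse, I want to show that a nonzero polynomial $h$ with $\deg_{x_i} h < d_i$ for all $i$ cannot lie in $I$; applied to $h = \bar f$ (which lies in $I$ whenever $f$ does, since $f - \bar f \in I$) this forces $\bar f = 0$. Equivalently, the residue classes of the monomials $x_1^{a_1}\cdots x_n^{a_n}$ with $0 \le a_i < d_i$ are $\bar k$-linearly independent in $A := \bar k[x_1,\ldots,x_n]/I$. I would prove this via the canonical isomorphism $A \cong \bigotimes_{i=1}^n \bar k[x_i]/(f_i(x_i))$ of $\bar k$-algebras (which follows from $\bar k[x_1,\ldots,x_n] \cong \bigotimes_i \bar k[x_i]$ together with right-exactness of the tensor product), under which $x_1^{a_1}\cdots x_n^{a_n}$ corresponds to the pure tensor of the $x_i^{a_i}$. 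Since $\{1, x_i, \ldots, x_i^{d_i-1}\}$ is a basis of $\bar k[x_i]/(f_i)$, the indicated monomials form a basis of the tensor product, hence of $A$; in particular $\dim_{\bar k} A = \prod_i d_i$. Because the existence step shows these same $\prod_i d_i$ monomials already span $A$, they must be a basis, so $\bar f$ is uniquely determined and $f \in I \iff \bar f = 0$.

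I expect the main obstacle to be making the uniqueness argument rigorous: the iterated-division part is routine, but one must justify that the reduced representatives really are unique. The tensor-product computation of $\dim_{\bar k} A$ is the cleanest route; an alternative, avoiding tensor products, is to observe that $\{f_1,\ldots,f_n\}$ is a Gr\"obner basis of $I$ with respect to any monomial order --- the leading monomials $x_i^{d_i}$ are pairwise coprime, so by Buchberger's first criterion every $S$-polynomial reduces to zero --- whence the normal form of any $f$ modulo this Gr\"obner basis is unique, has $\deg_{x_i} < d_i$, and equals $0$ exactly when $f \in I$.
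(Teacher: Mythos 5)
The paper does not actually prove this lemma: it is quoted verbatim from the companion paper \cite{H} (``We will also need the following two results from \cite{H}''), so there is no in-text argument to compare yours against. Your proof is correct and self-contained. The existence half by iterated monic division is sound, and you correctly flag the one point that needs care --- that reducing modulo $f_j(x_j)$ cannot increase $\deg_{x_i}$ for $i\neq j$, because $f_j$ involves only $x_j$ with constant coefficients. The uniqueness half is also right: the identification $\bar k[x_1,\ldots,x_n]/I\cong\bigotimes_i \bar k[x_i]/(f_i)$ shows the $\prod_i d_i$ reduced monomials are linearly independent in the quotient, so a nonzero reduced polynomial cannot lie in $I$, which applied to $\bar f$ gives the ``only if'' direction. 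Your alternative via Buchberger's first criterion (the leading monomials $x_i^{d_i}$ are pairwise coprime for any monomial order, so $\{f_1,\ldots,f_n\}$ is a Gr\"obner basis and the normal form is the reduced representative) is arguably the cleanest packaging, since it delivers existence, uniqueness, and the ideal-membership criterion in one stroke; either route is a complete proof of the statement.
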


\section{Last fall degrees of semilocal systems and computations}

\subsection{The case of solving for closed points}
\begin{theorem}
\label{solveG}
Let $k$ be a finite field and let $\bar{k}$ denote the algebraic closure of $k$.
Suppose an ordered set of polynomials $\cG=\{G_1,\ldots,G_m\}$ is $c$-{\em semi-local} with  $\cG = \mu\circ\cF\circ\lambda$ for some $\mu\in Gl_m (k)$, $\lambda\in Gl_n (k)$, and $c$-local $\cF=\{F_1,\ldots, F_m\}\subset k [ x_1,\ldots, x_n]$ such that $\cF$ can be partitioned into $\ell$ subsets $\cF_1$, ..., $\cF_{\ell}$ where the polynomials in $\cF_i$ are polynomials in $x_{ci-i+1},\ldots, x_{ci}$ for $i=1,\ldots,\ell$.
Suppose $|Z(\cG)|=s$.   Suppose for $i=1,\ldots,\ell$, the ideal generated by $\cF_i$ is radical and $d_{\cF_i} \le c'$ for some constant $c'$.  Then
$d_{\cG} \le s+ c'\lceil\log_2 s\rceil $.  Moreover,
the $s$ points in $Z(\cG)$ can be found in time
$n^{O(c')}+ s^{O(c\log s)}$.
\end{theorem}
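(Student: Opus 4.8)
The plan is to reduce to the local case and then exploit that, with respect to a degree‑refining monomial order, the division algorithm against a Gröbner basis never raises degrees; the block structure will make the relevant Gröbner basis essentially a disjoint union of the per‑block ones.

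\textbf{Reduction and a reformulation.} I would first invoke the invariance of the last fall degree under a linear change of equations and a linear change of variables to replace $\cG$ by $\cF$, so $d_{\cG}=d_{\cF}$ and it suffices to bound $d_{\cF}$. Since the blocks $\cF_1,\dots,\cF_\ell$ lie in disjoint variable sets $X_1,\dots,X_\ell$ (each of size $c$), one has $Z(\cF)=\prod_i Z(\cF_i)$, hence $s=\prod_i s_i$ with $s_i:=|Z(\cF_i)|$; the hypothesis that $I_i:=\langle\cF_i\rangle$ is radical will enter only to guarantee $\dim_k k[X_i]/I_i=s_i$. I will use the standard reformulation implicit in \cite{HKY,HKYY}: writing $I:=\langle\cF\rangle$, one has $d_{\cF}\le D$ precisely when $V_{\cF,d}=I\cap R_d$ for every $d\ge D$. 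So it is enough to prove that every $P\in I$ with $\deg P\le d$ lies in $V_{\cF,d}$ for all $d\ge D$, and I will in fact take $D=\max(c',\max_i s_i)$, which is at most $\max(c',s)$ and hence at most $s+c'\lceil\log_2 s\rceil$ for $s\ge 2$ (the case $s=1$ collapsing to the sharper $d_{\cG}\le c'$).

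\textbf{Block Gröbner data, reached in low degree.} Fix a degree‑refining monomial order on $R$, inducing one on each $k[X_i]$, and let $\cB_i$ be the reduced Gröbner basis of $I_i$ in $k[X_i]$. Because $I_i$ is zero‑dimensional with $\dim_k k[X_i]/I_i=s_i$, its affine Hilbert function starts at $1$ and strictly increases until it reaches $s_i$, hence reaches $s_i$ by degree $s_i-1$; therefore the degree‑$\le(s_i-1)$ part of $k[X_i]$ surjects onto $k[X_i]/I_i$, every monomial of degree $\ge s_i$ lies in $\mathrm{in}(I_i)$, and so $\deg\cB_i\le s_i$. As $\cF_i\subseteq\cF$ we have $V_{\cF,d}\supseteq V_{\cF_i,d}$, and $d_{\cF_i}\le c'$ gives $V_{\cF_i,d}=I_i\cap R_d$ for $d\ge c'$; hence $\cB_i\subseteq V_{\cF,d}$ for every $d\ge\max(c',s_i)$. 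Put $\cH:=\bigcup_i\cB_i$ and $c_0:=\max(c',\max_i s_i)$, so $\cH\subseteq V_{\cF,c_0}$ and, $V_{\cF,d}$ being closed under degree‑$\le d$ ideal operations, $V_{\cH,d}\subseteq V_{\cF,d}$ for all $d\ge c_0$. Finally the leading terms of distinct $\cB_i$ involve disjoint variables, so every mixed $S$‑polynomial reduces to $0$ by Buchberger's product criterion and the within‑block ones by construction; thus $\cH$ is a Gröbner basis of $I$ for this order, with $\deg\cH\le\max_i s_i$.

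\textbf{Closing the degree bound; the main obstacle.} For $P\in I$ with $d':=\deg P\ge\max_i s_i$, running the division algorithm against $\cH$ writes $P=\sum_{b\in\cH}q_b b$ with $\deg(q_b b)\le d'$ (the order being graded and the remainder vanishing since $P\in I$); as $\deg b\le\max_i s_i\le d'$ we get $q_b b\in V_{\cH,d'}$ and hence $P\in V_{\cH,d'}$. So $V_{\cH,d}=I\cap R_d$ for $d\ge\max_i s_i$, and together with $V_{\cH,d}\subseteq V_{\cF,d}\subseteq I\cap R_d$ for $d\ge c_0$ this yields
\[ d_{\cG}=d_{\cF}\ \le\ \max\bigl(c',\max_i s_i\bigr)\ \le\ \max(c',s)\ \le\ s+c'\lceil\log_2 s\rceil. \]
I expect the degree bookkeeping to be the only genuine difficulty: what must be ensured is that the \emph{multiplicative} quantity $s=\prod_i s_i$ enters the bound only additively. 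In the argument above this is automatic because locality makes the per‑block initial ideals live in disjoint variables, so no ``mixed'' Gröbner basis element of large degree is created; in the alternative, more pedestrian proof that merges the blocks two at a time — at each node producing, via the Shape Lemma and Lemma~\ref{x-g} (over an extension $K$ with $\binom{s}{2}<|K|$, using that $d_{\cF}$ is unchanged under base field extension), a separating univariate whose degree is the \emph{product} of the two subtree sizes — the same danger reappears when re‑expressing the lower‑level variables through the new separating form, and it is precisely a Lemma~\ref{red}‑type reduction that keeps the degree from squaring, at the cost of $O(c')$ extra degree for each of the $\lceil\log_2 s\rceil$ levels.

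\textbf{Finding the $s$ points.} Since $Z(\cG)=\lambda^{-1}(Z(\cF))$ and $Z(\cF)=\prod_i Z(\cF_i)$, it suffices to compute the $Z(\cF_i)$, form their product, and apply $\lambda^{-1}$ — organized block by block so as never to run a degree‑$s$ computation in all $n$ variables. A linear basis of $V_{\cF_i,d}$ is computable in $n^{O(d)}$ time (as in the proof of Lemma~\ref{first}); passing to $K$ with $\binom{s}{2}<|K|$ and using Lemma~\ref{x-g}, the shape data of $I_i$ — a separating linear form $t_i$, the degree‑$s_i$ univariate $g_i(t_i)$, and the relations $x=g_{i,x}(t_i)$ — already lies in $V_{\cF_i,\max(c',s_i)}$, hence is read off in $n^{O(c')}$ time; factoring $g_i$ and substituting then lists $Z(\cF_i)$, whose $s_i$ points have coordinates in extensions of degree $\le s_i$. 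Assembling the at most $\log_2 s$ nontrivial blocks into $Z(\cF)$ and applying $\lambda^{-1}$ costs $s^{O(c\log s)}$ in total, giving the stated $n^{O(c')}+s^{O(c\log s)}$, which is $n^{O(1)}$ when $c,c',s=O(1)$. The one thing requiring care is that the block partition of the variables (equivalently, $\lambda$ up to the block symmetries) be recovered from $\cG$ itself, which can be extracted from the low‑degree piece $V_{\cG,c'}$ of the ideal of $\cG$, in the spirit of the isomorphism‑of‑polynomials discussion later in the paper.
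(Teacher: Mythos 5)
Your bound on the last fall degree is correct, and it takes a genuinely different route from the paper's. The paper applies the Shape Lemma to each block with $|Z(\cF_i)|>1$ (via Lemma~\ref{x-g}), uses the doubling reduction of Lemma~\ref{red} to rewrite each block's monomials as univariates in a separating form $t_i$, and closes with Lemma~\ref{barf}; the degree cost accumulates additively over the blocks, giving $D=\sum_{i}\Delta_i\le s+c'\lceil\log_2 s\rceil$. Your argument --- that the union $\cH$ of the reduced Gr\"obner bases $\mathcal{B}_i$ of the blocks is a Gr\"obner basis of $\langle\cF\rangle$ because cross-block leading terms are coprime, that $\deg \mathcal{B}_i\le s_i$ by the staircase argument (this is where radicality enters, via $\dim_k k[X_i]/I_i=s_i$), that $\mathcal{B}_i\subseteq V_{\cF,\max(c',s_i)}$ by the defining property of $d_{\cF_i}\le c'$, and that graded division against $\cH$ certifies ideal membership without raising degree --- is sound, and in fact yields the sharper bound $d_{\cG}\le\max(c',\max_i s_i)$. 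This implies the stated bound whenever $s\ge 2$ and, as you note, is the correct statement in the edge case $s=1$, where the bound $s+c'\lceil\log_2 s\rceil=1$ as literally written can fail (one may need degree $c'$ to produce the linear forms). What your approach buys is that the multiplicative quantity $s=\prod_i s_i$ never enters at all; what the paper's approach buys is an explicit normal form modulo $\cF$ in the $t_i$'s, which it then reuses in the algorithmic part.

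The algorithmic half of your proposal has a genuine gap. The procedure ``compute the $Z(\cF_i)$, form their product, and apply $\lambda^{-1}$'' presupposes that the block decomposition --- the $\cF_i$, the partition of the variables, and $\lambda$ itself --- is part of the input. It is not: only $\cG$ is given, and recovering $\mu,\lambda$ from $\cG$ is exactly the semi-local isomorphism-of-polynomials problem, which the paper treats as hard in general (its presumed hardness underlies the cryptosystem of \S~\ref{crypto}, and the determinant-of-Jacobian attack succeeds only in special square cases). Your closing sentence defers precisely this step to ``the spirit of the IP discussion,'' which is not an argument. The paper's algorithm avoids the issue entirely: it computes a basis of $V_{\cG,c'}$ directly from $\cG$ in $n^{O(c')}$ time; the hidden decomposition guarantees (without ever being computed) that this space contains at least $n-O(c\log s)$ independent linear polynomials, namely the images under $\lambda^*$ of the $x_j-\alpha_j$ coming from the blocks with $|Z(\cF_i)|=1$; these are used to eliminate all but $O(c\log s)$ variables, and the residual system, of degree at most $s$, is solved via $V_{\cG',s}$ in $s^{O(c\log s)}$ time. (A smaller slip in the same part of your sketch: the shape data of $\cF_i$ has degree up to $s_i$, so it lies in $V_{\cF_i,\max(c',s_i)}$ and is not readable from a space computable in $n^{O(c')}$ time when $s_i>c'$.)
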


\begin{proof} Since $d_{\cG}=d_{\cF}$, we will bound $d_{\cG}$ by bounding $d_{\cF}$.

For $i$ such that $|Z(\cF_i)|=1$, let $Z(\cF_i)=(\alpha_{i1},\ldots,\alpha_{ic})$, then $x_{ci-i+1}-\alpha_{i1}$, ..., $x_{ci}-\alpha_{ic}\in V_{\cF_i, c'}\subset V_{\cF, c'}$.  Therefore if $|Z(\cF_i)|=1$ and $m$ is a monomial in $x_{ci-i+1},\ldots, x_{ci}$, then  $m\equiv_{\max(\deg m, c')} \alpha_m \pmod{\cF}$ for some constant  $\alpha_m$.
Let $I_1$ be the set of $i$ such that $| Z(\cF_i )| = 1$.
For any monomial $m$, writing $m=m_1 m_2$ where $m_1$ is a monomial in the set of variables $\{ x_{ci-i+1},\ldots, x_{ci}: i\in I_1\}$, and $m_2$ is a monomial in the remaining variables, then
$m\equiv_{\max(\deg m, c')} \alpha m_2\pmod{\cF}$ where $\alpha$ is a constant.

Let $I$ be the set of $i$ such that $| Z(\cF_i )| > 1$.  For $i\in I$, let $s_i =|Z(\cF_i )|$, and let
$\Delta_i = \max (c', s_i)$.
We have $s=|Z(\cG)|=|Z(\cF)|=\prod_{i\in I} |Z(\cF_i)|=\prod_{i\in I} s_i$.  Since $s_i \ge 2$ for $i\in I$, it follows that $|I|\le \lceil \log_2 s \rceil$, and that
 $\sum_{i\in I} s_i \le \prod_{i\in I} s_i = s$.
Since  $\max (c', s_i)\le c'+s_i$ and $|I|\le\lceil\log_2 s\rceil$, we have
$$\sum_{i\in I} \Delta_i \le \sum_{i\in I} (s_i+c') \le  s+ c'\lceil\log_2 s\rceil.$$

Apply Lemma~\ref{x-g} to such $\cF_i$, we see that there is some linear form $t_i$ in $x_{ci-i+1},\ldots, x_{ci}$ such that $V_{\cF_i, \Delta_i}$ contains  $g_i (t_i)$, and $x_{ci-i+j} - h_{ci-i+j} (t_i)$, $j=1,\ldots,c$, with $\deg g_i = s_i$ and $\deg (h_{ci-i+j}) < \deg g_i$ for $j=1,\ldots,c$.  Write $m_2=\prod_i m_{2,i}$.  By Lemma~\ref{red} we see that for each monomial $m_{2,i}$ in $x_{ci-i+1},\ldots, x_{ci}$, we have $m_{2,i}\equiv_{\max(\deg m_{2,i}, 2\Delta_i)} \gamma_i (t_i)\pmod{\cF}$ for some univariate $\gamma_i$ of degree less than $\deg g_i =s_i$.

For any monomial $m$, we have $m\equiv_{\max(\deg m, D)} h_m\pmod{\cF}$ where $D=\sum_{i\in I} \Delta_i$, and $h_m$ is a polynomial in $t_i$ where $i\in I$ and $\deg_{t_i} h_m < \deg g_i=| Z(\cF_i ) |$ .  It follows that for any polynomial $H\in R$, $H\equiv_{\max(\deg H, D)} h\pmod{\cF}$ for some polynomial $h$ in $t_{i}$, $i\in I$, where $\deg_{t_i} h < \deg g_i$ for all $i\in I$.  By Lemma~\ref{barf}, we see that $H$ belongs to the ideal generated by $\cF$ if and only if $h=0$, if and only if
$H\equiv_{\max(\deg H, D)} 0\pmod{\cF}$.
We conclude that $d_{\cF} \le D \le s+ c'\lceil\log_2 s\rceil $.

To solve for $Z(\cG)$, first observe again for $i$ such that $|Z(\cF_i)|=1$, let $Z(\cF_i)=(\alpha_{i1},\ldots,\alpha_{ic})$, then $x_{ci-i+1}-\alpha_{i1}$, ..., $x_{ci}-\alpha_{ic}\in V_{\cF_i, c'}\subset V_{\cF, c'}$.  There are $n- O(c\log s)$ such variables, since $|I | = O(\log s)$.
Their images under $\lambda^*$ are in $V_{\cG, c'}$ and are linearly independent.  Therefore the linear polynomials in $V_{\cG, c'}$ form a linear subspace of dimension at least $n-O(c\log s)$.    As we have seen, for $i$ such that $| Z(\cF_i)| > 1$, $V_{\cF_i, \Delta_i}$ contains  $g_i (t_i)$, and $x_{ci-i+j} - h_{ci-i+j} (t_i)$, $j=1,\ldots,c$, with $\deg g_i = s_i$ and $\deg (h_{ci-i+j}) < \deg g_i$ for $j=1,\ldots,c$.  Their images under $\lambda^*$ are in $V_{\cG,  \Delta }$, where $\Delta = \max{\Delta_i} \le \max (c', s)$.
To solve for $Z(\cG)$, first we construct a basis $E$ of $V_{\cG,  c'}$, as well as a basis $E_1$ of the subspace consisting of the linear polynomials in $V_{\cG, c'}$.   This can be done in time $n^{O(c')}$ (see \cite{HKY,HKYY}) and note that $E_1$ consists of $n-O(c\log s)$ linear polynomials.  Using the linear equations in $E_1$, $\cG\cup E$ can be reduced to a polynomial system $\cG'$ in $O(c\log s)$ variables.   We are now reduced to solving a polynomial system consisting of a basis of $V_{\cG',  s }$.  The system can be constructed and solved in $s^{O(c\log s))}$ time (see \cite{HKY,HKYY}).
\end{proof}

\subsection{The case of solving for rational points}
Suppose $q=p^d$ where $p$ is prime.  Then $k=\F_q=Z(x^q-x)\simeq Z(\{ x^p-x_1, x_1^p-x_2,\ldots,x_{d-1}^p-x\})$ where $x\in k$ corresponds to $(x,x_1,\ldots,x_{d-1})$, with $x_1=x^p$,\ldots, $x_{d-1} = x_{d-2}^p$.
Throughout this section we identify $k[x_1,\ldots,x_n]$ as a subring of $k[x_{ij}: i=1,\ldots,n, j=0,\ldots,d-1]$ by identifying $x_i$ with $x_{i0}$ for $i=1,\ldots,n$, hence $f(x_1,\ldots, x_n)\in k[x_1,\ldots,x_n]$ with $f(x_{10},\ldots,x_{n0})\in k[x_{ij}: i=1,\ldots,n, j=0,\ldots,d-1]$.
Let $\cE'=\{x_{i0}^p-x_{i1},\ldots, x_{i \ d-1}^p-x_{i0}: i=1,\ldots,n\}$.
Then $Z(\cE') \simeq k^n$ where $(x_{ij})$ corresponds to $(x_{i0})$, and
$Z(\cF \cup \cE')\simeq Z_k (\cF)$.

\begin{theorem}
\label{lfdk}
Let $k$ be a finite field and let $\bar{k}$ denote the algebraic closure of $k$.
Suppose an ordered set of polynomials $\cG=\{G_1,\ldots,G_{n'}\}$ is $c$-{\em semi-local} with  $\cG = \mu\circ\cF\circ\lambda$ for some $\mu\in Gl_{n'} (k)$, $\lambda\in Gl_n (k)$, and $c$-local $\cF=\{F_1,\ldots, F_{n'}\}\subset k [ x_1,\ldots, x_n]$ such that $\cF$ can be partitioned into $\ell$ subsets $\cF_1$, ..., $\cF_{\ell}$ where the polynomials in $\cF_i$ are polynomials in $x_{ci-i+1},\ldots, x_{ci}$ for $i=1,\ldots,\ell$.
Suppose for $i=1,\ldots,\ell$, the ideal generated by $\cF_i$ is radical. Let  $\Delta = \max\{ | Z (\cF_i) |, d_{\cF_i}: i=1,\ldots,\ell\}$. (1) If $|Z_k (\cF) | = 1$ then $d_{\cG\cup\cE'} = O(\Delta p)$.  (2) If $|Z_k (\cF) | =s_0$ and $\left(\begin{array}{c} s\\2 \end{array}\right) < |k|$ where $\ Z(\cF_i ) \le s$ for all $i$,  then $d_{\cG\cup\cE'} = O(\max(\Delta p, s_0 p, s_0 \log s_0))$. (3) The $k$-rational points in $Z(\cG)$ can be found in  $(nd)^{O(\Delta p)} + s_0^{O(\log s_0)}$ time.
\end{theorem}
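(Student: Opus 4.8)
The plan is to strip off the outer linear maps, analyze one local block at a time, and reassemble exactly as in the proof of Theorem~\ref{solveG}. By the invariance of the last fall degree under linear change of equations and by Lemma~\ref{dFpm}, $d_{\cG\cup\cE'}=d_{\cF\cup\cE'}$, so I may assume $\cG=\cF$ is $c$-local. Then $\cF\cup\cE'=\bigcup_{i=1}^{\ell}(\cF_i\cup\cE'_i)$, a disjoint union over the blocks (each $\cE'_i$ being the length-$d$ Frobenius chains attached to the $c$ variables of $\cF_i$), and it suffices to understand each $\cF_i\cup\cE'_i$ and then recombine.

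\emph{One block.} Fix $i$, relabel its variables $y_1,\dots,y_c$, and set $s_i=|Z(\cF_i)|\le\Delta$, $s_{0,i}=|Z_k(\cF_i)|$. Apply Lemma~\ref{x-g} to $\cF_i$ to get a point-separating linear form $t_i=\sum_j a_j y_j$ and shape relations $g_i(t_i)\equiv 0$, $y_j\equiv g_j(t_i)\pmod{\cF_i}$ with $\deg g_i=s_i$, $\deg g_j<s_i$, which (being of degree $\le\Delta\ge d_{\cF_i}$) lie in $V_{\cF_i,\Delta}$. In case~(2) the hypothesis on $\binom s2$ lets me take $a_j\in k$. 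In case~(1) I instead take $K_i=\F_{p^{de_i}}$ with $e_i$ a prime chosen so large that $\binom{s_i}2<|K_i|$ and, moreover, $Z_{K_i}(\cF_i)=Z_k(\cF_i)$; the latter is possible because the finitely many closed points of $\cF_i$ not rational over $k$ have fields of definition $\F_{p^{d_P}}$ with $d_P/d\ge 2$, and a prime exceeding all these ratios divides none of them. The key point is that even when the $a_j$ lie outside $k$, a Frobenius chain for $t_i$ can be synthesized from $\cE'_i$: writing $y_j^{[s]}$ for the $s$-th Frobenius variable of $y_j$ ($y_j^{[0]}=y_j$) and $t_i^{(r)}:=\sum_j a_j^{p^r}\,y_j^{[\,r\bmod d\,]}$, one checks $(t_i^{(r)})^p-t_i^{(r+1)}\in V_{\cE'_i,p}$ for every $r$, and the chain closes after $d e_i$ steps (after $d$ steps in case~(2)) since $a_j\in K_i$.

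\emph{The block reduction.} I would now run the argument of Lemma~\ref{dp} with $g_i$ playing the role of $f$, $t_i$ that of $x_0$, and this chain; since the degree bound there is $p\deg f$, \emph{independent of the chain length}, it yields $\tilde g_i:=\gcd\!\big(g_i,\,t_i^{|K_i|}-t_i\big)\equiv_{O(s_ip)}0$ and $t_i^{(r)}\equiv_{O(s_ip)}h_r(t_i)$ with $\deg h_r<\deg\tilde g_i$, all modulo $\cF_i\cup\cE'_i$. As $t_i$ separates points and $g_j\in K_i[t_i]$, one has $\deg\tilde g_i=|Z_{K_i}(\cF_i)|=s_{0,i}$; substituting the shape relations and chains, every variable of block $i$ (the $y_j$ and all their Frobenius variables) becomes congruent mod $\cF_i\cup\cE'_i$, at degree $O(\max(\Delta p,\,s_{0,i}p,\,s_{0,i}\log s_{0,i}))$, to a polynomial in $t_i$ of $t_i$-degree $<s_{0,i}$; in case~(1), $s_{0,i}=1$, so these are constants.

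\emph{Assembly, algorithm, and the main obstacle.} A divide-and-conquer reduction in the style of Lemma~\ref{red}, applied to an arbitrary monomial split into halves, then collapses the factors from blocks with $s_{0,i}=1$ to constants and those from the (at most $\log_2 s_0$) blocks with $s_{0,i}>1$ to polynomials in $t_i$ of $t_i$-degree $<s_{0,i}$; since $\prod_i s_{0,i}=s_0$, their products over blocks have total degree $\le s_0$, so every polynomial $H$ is congruent at degree $\max(\deg H,D)$, with $D=O(\max(\Delta p,s_0p,s_0\log s_0))$, to a polynomial in the $t_i$ with $\deg_{t_i}<s_{0,i}$. Evaluation on $Z_k(\cF)=\prod_i Z_k(\cF_i)$ identifies the space of such polynomials with the radical algebra $\bar k[x]/(\cF\cup\cE')$, of dimension $s_0$ (cf.\ Lemma~\ref{barf}), so $H$ lies in the ideal iff its normal form vanishes; this gives $d_{\cF\cup\cE'}=O(\max(\Delta p,s_0p,s_0\log s_0))$, and $O(\Delta p)$ in case~(1). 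For~(3) I would work directly with $\cG\cup\cE'$ (whose $V$-spaces stabilize by degree $D$): compute a basis of $V_{\cG\cup\cE',\,O(\Delta p)}$ in $(nd)^{O(\Delta p)}$ time (there are $nd$ variables), eliminate all but the $O(\log s_0)$ master variables via the linear and shape relations it contains, and solve the residual low-degree system in $s_0^{O(\log s_0)}$ time. The main obstacle is precisely the single-block analysis in case~(1): no naive multivariate analogue of Lemma~\ref{dp} is available, because a point-separating linear form genuinely requires coefficients outside $k$ and hence has a Frobenius chain that does not close after $d$ steps; the fix is the synthetic length-$de_i$ chain built from $\cE'_i$ together with the arithmetic choice of $K_i$ forcing $Z_{K_i}(\cF_i)=Z_k(\cF_i)$, and one must check carefully that the Lemma~\ref{dp} degree bound, depending only on $\deg g_i\le\Delta$, does not deteriorate with the longer chain. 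A secondary point is the assembly bookkeeping: as in Theorem~\ref{solveG}, one must verify that the per-block degree costs accumulate only over the $O(\log s_0)$ nontrivial blocks rather than over all $\ell$.
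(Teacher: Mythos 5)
Your proposal follows the paper's architecture almost exactly: reduce to $\cF$ via Lemma~\ref{dFpm} and invariance under linear change of equations, treat each local block with Lemma~\ref{x-g} and Lemma~\ref{dp}, reassemble with a Lemma~\ref{red}-style divide-and-conquer and Lemma~\ref{barf}, and for part (3) extract the linear and shape relations from $V_{\cG\cup\cE',\,O(\Delta p)}$ and eliminate down to $O(\log s_0)$ variables. For parts (2) and (3) this is essentially the paper's proof. The one genuine divergence is case (1), where the paper uses no separating linear form at all: for each coordinate $x_j$ of block $i$ it takes the univariate annihilator $f_j(x_j)=\prod_{\alpha\in Z(\cF_i)}(x_j-x_j(\alpha))\in k[x_j]$, which lies in the radical ideal and hence in $V_{\cF_i,\Delta}$, applies Lemma~\ref{dp} with the Frobenius chain already present in $\cE'$, and reads off $x_j\equiv\beta_j$ from $\gcd(f_j,x_j^q-x_j)$ --- no field extension and no synthetic chain. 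Your detour through $K_i=\F_{p^{de_i}}$ with $e_i$ prime, chosen so that $Z_{K_i}(\cF_i)=Z_k(\cF_i)$, together with the length-$de_i$ synthetic chain, is correct: the two observations that carry it (the bound in Lemma~\ref{dp} is $p\deg f$, independent of the chain length, and $t_i(\alpha)\in K_i$ forces $\alpha\in Z_{K_i}(\cF_i)$ because $t_i$ has $K_i$-coefficients and separates the points of $Z(\cF_i)$) both check out, and it even sidesteps a subtlety the paper's coordinate-wise argument elides, namely that a non-rational point of $Z(\cF_i)$ can have a $k$-rational $j$-th coordinate, making $\deg\gcd(f_j,x_j^q-x_j)>1$ even when $|Z_k(\cF_i)|=1$. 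What your route costs is that the congruences are computed in $V$-spaces with coefficients in varying extensions $K_i$, so to conclude $x_{jk}-\beta_{jk}\in V_{\cF\cup\cE',\,O(\Delta p)}$ over $k$ you must explicitly invoke the descent of $V$-membership of $k$-coefficient polynomials under base field extension (the paper relies on the same fact implicitly when it uses $K$-coefficient outputs of Lemmas~\ref{x-g} and~\ref{dp}); you should state this step. With that caveat your write-up constitutes a valid proof, heavier than the paper's in case (1) but arguably more careful there.
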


\begin{proof} Since $\cG = \mu(\lambda^*(\cF))$, $d_{\cG\cup\cE'}=d_{\lambda^* (\cF)\cup\cE'}$.  By Lemma~\ref{dFpm} $d_{\cF\cup\cE'} = d_{\lambda^*(\cF)\cup\cE'}$.  Therefore $d_{\cG\cup\cE'}=d_{\cF\cup\cE'}$, so to prove (1) and (2) of the theorem it is enough to focus on $\cF$.

For $j=ci-i+1,\ldots,ci$, consider $f_j (x_j)=\prod_{\alpha\in Z(\cF_i)} (x_j - x_j (\alpha))$.  Since the ideal generated by $\cF_i$ is radical, $f_j (x_j)$ is in the ideal.   Since $\deg f_j\le |Z(\cF_i)|$,  $f_i (x_i) \in V_{\cF_i, \Delta}$.

Apply Lemma~\ref{dp} to $f_j (x_j)$, letting  $g_j=gcd ( f_j, x_j^q-x_j )$ in $k[x_j]$,  it follows that $g_j\equiv_{s p} 0 \pmod{\cG_j}$
where $\cG_j = \{ f_j, x_j^p - x_{j1}, x_{j1}^p - x_{j2},\ldots, x_{j\ d-1}^p - x_j\}$.   Since $\cG_j\subset V_{\cF_i\cup\cE'_i, \Delta}$, it follows that
$g_j\in V_{\cF\cup\cE', \max(sp,\Delta)}$.  Note that $\max(sp,\Delta) \le \Delta p$, so $g_j\in V_{\cF\cup\cE', \Delta p}$.

If $Z_k (\cF_i)$ has one point then $\deg g_j = 1$ hence $x_j\equiv_1 \beta_j \pmod{g_j}$ where $\beta_j\in k$ for $j=ci-i+1,\ldots, ci$.
If $|Z_k (\cF)|=1$, then $|Z_k (\cF_i)|=1$ for all $i$, so $x_j\equiv_1 \beta_j \pmod{g_j}$ for $j=1,\ldots,n$.  For all $F\in k[x_1,\ldots,x_n]$,
$F(x_1,\ldots,x_n]\equiv_{\deg F} F (\beta_1,\ldots,\beta_n) \pmod{\{g_1,\ldots, g_n\}}$.

We have $x_{j1}\equiv_p x_j^p \equiv \beta_j^p\pmod{\cE'\cup\{g_j\}}$ and it is easy to see inductively $x_{jk}\equiv_p \beta_j^{p^k} \pmod{\cE'\cup\{g_j\}}$.
Put $\beta_{jk}=\beta_j^{p^k}$.  Since $g_j \in V_{\cF\cup\cE', \Delta p}$, we have $x_{jk}\equiv_{\Delta p} \beta_{jk} \pmod{\cF\cup\cE'}$ for all $j,k$.
It follows that for all $F\in k[x_{ij}: i=1,\ldots,n; j=0,\ldots, d-1]$,
$F\equiv_{\max(\Delta p,\deg F)}  F(\beta_{jk}: j=1,\ldots,n, k=0,\ldots,d-1) \pmod{\cF\cup\cE'}$, and $F$ is in the ideal generated by $\cF\cup\cE'$ if and only if $F(\beta_{jk}: j=1,\ldots,n, k=0,\ldots,d-1)=0$.  Therefore $d_{\cF\cup\cE'} \le \Delta p$.  This proves (1).

More generally let $I=\{ i: 1\le i \le \ell, |Z_k (\cF_i ) | \ge 2 \}$.  Let $|I| = e$.  Then $2^e \le s_0$.  Without loss of generality suppose $I = \{ 1,\ldots, e\}$.  Then for all $F\in k[x_{ij}: i=1,\ldots,n; j=0,\ldots, d-1]$,
$F\equiv_{\max(\Delta p,\deg F)}  F(x_{11},...,x_{ce\ d-1},\beta_{ce+1 \ 1},..., \beta_{n \ d-1} ) \pmod{\cF\cup\cE'}$.

Now consider $\cF_i$ for $i=1,\ldots, e$.  Since the ideal generated by $\cF_i$ is radical, by Lemma~\ref{x-g} we know that there is some linear form $t_i=a_{i1} x_{ci-c+1}+\ldots + a_{ic} x_{ci}$ with $a_{i1},\ldots,a_{ic}\in k$, and univariate polynomials $f_i$ and $h_{ci-c+1}$, ..., $h_{ci}$ with coefficients in $K$ where $f_i$ is of degree $| Z(\cF_i) | \le s$, and $h_{ci-c+1}$, ..., $h_{ci}$ are of degree less than $\deg f_i\le s$, such that the ideal generated by $\cF_i$ is also generated by the set $\cF'_i$ consisting of $f_i (t_i)$ and $x_j - h_j (t_i)$ for $j=ci-c+1$, ..., $ci$.
We have $\cF'_i \subset V_{\cF_i, \Delta}$.

Let $t_{ij} = a_{i1}^{p^j} x_{ci-c+1 \ j}+\ldots+ a_{ic}^{p^j} x_{ci \ j}$ for $j=0,\ldots, m-1$.  Then $t_i = t_{i0}$ and $t_{ij}^p \equiv_p t_{i \ j+1} \pmod{\cE'_i} \}$, where $\cE'_i = \{x_{j0}^p-x_{j1},\ldots, x_{j \ d-1}^p-x_{j0}: j=ci-i+1,\ldots, ci\}$.
Let $\cG'_i = \{ f_i(t_i), t_i^p-t_{i1}, \ldots, t_{i\ d-1}^p - t_i\}$.  Then $\cG'_i\subset V_{\cF_i\cup\cE'_i,\max(\Delta,p)}$.

Apply Lemma~\ref{dp} to $\cG'_i$, letting  $f'_i=gcd ( f_i, t_i^q-t_i )\in k [ t_i]$, it follows that $f'_i\equiv_{s p} 0 \pmod{\cG'_i}$ and $t_{ij} \equiv_{sp} h_{ij} \pmod{\cG'_i}$ with $h_{ij}\in K[t_i]$ and  $\deg h_{ij} < \deg f'_i \le s_0$ for $j=1,\ldots,m-1$.

We have $h_j (t_i)\equiv_{\deg h_j} h'_j (t_i) \pmod{f'_i }$ with $\deg h'_j < \deg f'_i \le s_0$ for $j=ci-c+1,\ldots, ci$.  Hence
$x_j - h_j (t_i) \equiv_{\deg h_j} x_j - h'_j (t_i) \pmod{f'_i }$.  Since $x_j \equiv_{\deg h_j} h_j (t_i)\pmod{\cF'_i}$,
$x_j \equiv_{\deg h_j} h'_j (t_i)\pmod{\cF'_i\cup\{ f'_i \} }$.  Since $f'_i\equiv_{sp} 0\pmod{\cG'_i}$ and $\cG'_i \subset V_{\cF_i\cup\cE'_i,\max(\Delta,p)}$, we conclude that $f'_i\equiv_{\Delta p} 0 \pmod{\cF_i \cup \cE'_i}$, and  $x_j\equiv_{\Delta p} h'_j (t_i) \pmod{\cF_i \cup \cE'_i}$ for $j=ci-c+1,\ldots, ci$.

For $j=ci-c+1,\ldots, ci$, we have $x_{j1}\equiv_p x_j^p \pmod{\cE'_i}$, $x_j^p \equiv_{s_0 p} (h'_j (t_i))^p \pmod{x_j - h'_j (t_i)}$, $(h'_j (t_i))^p\equiv_{s_0 p} h'_{j1} (t_i) \pmod{f' (t_i)}$ with $\deg h'_{j1} < \deg f'_i \le s_0$.  We conclude
$x_{j1}\equiv_{\max(\Delta,s_0)p} h'_{j1}\pmod{\cF_i\cup \cE'_i}$
It is easy to see inductively $x_{jk}\equiv_{\max(\Delta,s_0) p} h'_{jk} (t_i) \pmod{\cF_i\cup \cE'_i}$ where
$\deg h'_{jk} < s_0$.

For $i=1,\ldots, e$, let $\Lambda_i$ be the set consisting of $f'_i (t_i)$ and  $x_{jk} - h'_{jk} (t_i)$ for $j=ci-i+1,\ldots, ci$ and $k=0,\ldots,d-1$.   Then $\Lambda_i \subset V_{\cF_i\cup\cE'_i, \max(\Delta, s_0) p}$.
Suppose $m$ is a monomial in $x_{jk}$ where $j=ci-i+1,\ldots, ci$ and $k=0,\ldots,d-1$.  Then
 $m\equiv_{\max(\deg m, 2s_0)} h (t_i) \pmod{\Lambda_i}$ for some $h (t_i)$ of degree less than $s_0$. Since $\Lambda_i \subset V_{\cF_i\cup\cE'_i, \max(\Delta, s_0) p}$, it follows that $m\equiv_{\max(\deg m, \Delta p, s_0 p)} h (t_i) \pmod{\cF_i\cup\cE'_i}$ for some $h (t_i)$ of degree less than $s_0$.

Let $\Lambda=\cup_{i=1}^e \Lambda_i$. Suppose $m$ is a monomial in $x_{ij}$, $i=1,\ldots, ce$, $j=0,\ldots, d-1$.    Then
$m\equiv_{\max(\deg m , 2s_0, s_0 e)} h_m (t_1,...,t_e) \pmod{\Lambda}$ with $\deg_{t_i} h_m < s_0$ and $\deg h_m < s_0 e$.  Hence
$m\equiv_{\max(\deg m , s_0 p, s_0 e, \Delta p)} h_m (t_1,...,t_e) \pmod{\cF\cup\cE'}$ with $\deg_{t_i} h_m < s_0$ and $\deg h_m < s_0 e$.

For $F\in k[x_{ij}: i=1,\ldots,n; j=0,\ldots, d-1]$,  we have
\begin{eqnarray*}
F & \equiv_{\max(\deg F, \Delta p)}&  F(x_{11},...,x_{ce\ d-1},\beta_{ce+1 \ 1},..., \beta_{n \ d-1} ) \pmod{\cF\cup\cE'}\\
 & \equiv_{\max(\deg F, \Delta p, s_0 p, s_0 e)} & F' (t_1,\ldots, t_e) \pmod{\cF\cup\cE'}
\end{eqnarray*}
where $\deg_{t_i} F' < \deg f'_i\le s_0$ for $i=1,\ldots,e$.

If $F$ is in the ideal $J$ generated by $\cF\cup\cE'$ then $F'$ is in the ideal $J\cap k[t_1,\ldots,t_e]$, which is generated by $f'_1 (t_1)$, ..., $f'_e (t_e)$.  By Lemma~\ref{barf} we conclude $F'=0$ since $\deg_{t_i} F' < \deg f'_i$ for $i=1,\ldots, e$.  Therefore if
$F\in J$  then $F\equiv_{\max(\deg F, \Delta p, s_0 p, s_0 e)} 0 \pmod{\cF\cup\cE'}$.  Since $e=O(\log s_0)$, we conclude
that $d_{\cF\cup\cE'} = O(\max(\Delta p, s_0 p, s_0 \log s_0))$, hence (2) follows.

To prove (3), first note that for $i> e $ where $| Z_k (\cF_i ) |=1$, $x_i \equiv_{\Delta p} \beta_i \pmod{ \cF_i \cup \cE_i'}$ for some $\beta_i \in k$, as argued before.  So, the images of $x_i - \beta_i$ under $\lambda^*$ are all in $V_{\cG\cup\cE', \Delta p}$.   They are linearly independent and their number is $n-e \ge n-\log_2 s_0$.

For $i$ such that $| Z_k (\cF_i) | > 1$, we have as argued before $f'_i (t_i)$, $x_j - h'_j (t_i) \in V_{\cG\cup\cE', \Delta p}$, for $j=ci-i+1,\ldots, ci$, $\deg f'_i = |Z_k (\cF_i) |$ and $\deg h'_j < \deg f'_i$ for $j=ci-i+1,\ldots, ci$.

The image under $\lambda^*$ of $x_i - \beta_i$, $i  > e$, and $f'_i (t_i)$, $x_j - h'_j (t_i)$, $j=ci-i+1,\ldots, ci$, $i \le e$, completely determines $Z_k (\cG)$.  They are all in $V_{\cG\cup\cE', \Delta p}$.

Put $U=V_{\cG\cup\cE', \Delta p} \cap k [ x_1,\ldots, x_n]$.  For $i \ge 0$ let $U_i  = \{ H\in U: \deg H \le i\}$.   A basis $B_1$ of $U_1$ as well as a basis $B_2$ of $U_{s_0}$  can be constructed in $(nd)^{\Delta p}$ time.   We have $Z (B_1 ) \subset Z(B_2) = Z_k ( \cG )$.  In solving for $Z( B_2)$ we can use the linear equations in $B_1$ to eliminate $|B_1 | \ge n-\log_2 s_0$ variables.  Then we are left with a system in $O(\log s_0)$ variables of degree bounded by $s_0$.  The reduced system can then be solved in
$s_0^{O(\log s_0)}$ time.  We conclude that the total running time for finding $Z_k (\cG)$ is $(nd)^{O(\Delta p)} + s_0^{O(\log s_0)}$.
\end{proof}

\section{Weil descent systems}
\label{weil}
Throughout this section suppose $k$ is a finite field and $K$ is an extension of degree $n$ over $k$.

Suppose $\theta_1,\ldots,\theta_n$ is a basis of $K$ over $k$.  Let $\bt=(\theta_1,\ldots,\theta_n)$.
Suppose $f\in K[x_1,\ldots,x_c]$.  Let $\hat{x}_i = (x_{i1},\ldots,x_{in})$ for $i=1,\ldots, c$.
For $\hat{x}=(x_1,\ldots,x_n)\in\bar{k}^n$, let $\la \hat{x},\bt\ra = \sum_{i=1}^n x_{i}\theta_j$ for $i=1,\ldots,n$.
The $K$-linear map from $t: k^n\to K$ sending $\hat{x} =(x_1,\ldots,x_n)\in k^n$ to $\la \hat{x},\bt\ra = \sum_{i=1}^n x_i \theta_i$ is bijective.

The {\em Weil decent} of $f$ with respect to $\bt$ consists of an ordered set $\hat{f}$ of $n$ polynomials $f_1,\ldots,f_n\in k[\hat{x}_1,\ldots,\hat{x}_c]$ defined by the following eqution:
\[ f(t (\hat{x}_1),\ldots,t (\hat{x}_c))= f(\la \hat{x}_1,\bt\ra,\ldots,\la \hat{x}_c,\bt\ra)=\sum_{i=1}^n f_i \theta_i = \la \bt, \hat{f}\ra.\]

Let $\sigma$ be the Frobenius automorphism of $K$ over $k$.  Then
\[ (f(t (\hat{x}_1),\ldots,t (\hat{x}_c)))^{\sigma^i}= (\sum_{j=1}^n f_j \theta_j)^{\sigma^i}.\]

We have
\[((\sum_{j=1}^n f_j \theta_j)^{\sigma^i})_{i=0}^{n-1} =(\sum_{j=1}^n f_j \theta_j^{\sigma^i})_{i=0}^{n-1} =
(\la \bt^{\sigma^i}, \hat{f}\ra)_{i=0}^{n-1} = \mu\circ\hat{f},\]
where $\mu\in Gl_n (K)$ with $\bt^{\sigma^i}$ as the $i$-th row, if the rows are indexed by $i=0,\ldots, n-1$.

On the other hand,
\[ ((f(t (\hat{x}_1),\ldots,t (\hat{x}_c)))^{\sigma^i})_{i=0}^{n-1}= (f^{\sigma^i} ( \la \bt^{\sigma^i}, \hat{x}_1\ra,\ldots, \la \bt^{\sigma^i},\hat{x}_c\ra ))_{i=0}^{n-1} =(f^{\sigma^i} (\lambda_i (\hat{x}_1,\ldots,\hat{x}_c))_{i=0}^{n-1}
= (f^{\sigma^i})_{i=0}^{n-1}\circ\lambda,\]
where $\lambda\in Gl_{cn} (K)$ and $\lambda = (\lambda_i)_{i=0}^{n-1}$ where $\lambda_i: \bar{k}^{cn}\to\bar{k}^c$ sending $(\hat{x}_1,\ldots,\bar{x}_c)\in\bar{k}^{cn}$ to $(\la \bt^{\sigma^i},\hat{x}_1\ra,\ldots, \la \bt^{\sigma^i},\hat{x}_c \ra )$.
Note that  up to row permutation $\lambda$ is the block diagonal matrix with $c$ copies of $\mu$ on the diagonal.

Therefore, $\mu\circ\hat{f} = (f^{\sigma^i})_{i=0}^{n-1}\circ\lambda$.

Now suppose $\cF$ is an ordered set of $c$ polynomials $f_1,\ldots,f_c\in K[x_1,\ldots,x_c]$.  Let $\hat{f}_i$ be the ordered set consisting of the Weil descent of $f_i$ with respect to $\bt$ for $i=1,\ldots, c$.  The Weil descent system of $\cF$ with respect to $\bt$, $\hat{\cF}$, consists of the polynomials in $\hat{f}_i$, $i=1,\ldots, c$.  We have
\[ \tilde{\mu}\circ\ \hat{\cF} = (f_1^{\sigma^i},\ldots,f_c^{\sigma^i})_{i=0}^{n-1}\circ\lambda\]
where $\tilde{\mu}\in Gl_{cn} (K)$ and up to row permutation $\tilde{\mu}$ is the block diagonal matrix with $c$ copies of $\mu$ on the diagonal.
Therefore for
\[ \hat{\cF} = \tilde{\mu}^{-1} \circ (f_1^{\sigma^i},\ldots,f_c^{\sigma^i})_{i=0}^{n-1}\circ\lambda\]
and we conclude that $\hat{\cF}$ is a $c$-semi-local polynomial system.

If $\rho\in Gl_{cn} (k)$ then
\[\rho^{-1}\circ \hat{\cF}\circ\rho = \rho^{-1} \circ \tilde{\mu}^{-1} \circ (f_1^{\sigma^i},\ldots,f_c^{\sigma^i})_{i=0}^{n-1}\circ\lambda\circ\rho\]
and we conclude that $\rho^{-1}\circ \hat{\cF}\circ\rho$ is a $c$-semi-local polynomial system.
We have proved the following.

\begin{theorem}
\label{wd}
Suppose $\cF$ is an ordered set of $c$ polynomials $f_1,\ldots,f_c\in K[x_1,\ldots,x_c]$.  Then the Weil descent system $\hat{\cF}$ of $\cF$ is $c$-semi-local.  If $\rho\in Gl_{cn} (k)$ then $\rho^{-1}\circ \hat{\cF}\circ\rho $ is $c$-semi-local.
\end{theorem}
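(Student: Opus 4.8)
The plan is to write the Weil descent as a composition $(\text{linear over } K)\circ(\text{block system})\circ(\text{linear over } K)$ and then observe that the middle block system is, after a harmless reindexing of variables, literally $c$-local. First I fix a $k$-basis $\theta_1,\ldots,\theta_n$ of $K$, put $\bt=(\theta_1,\ldots,\theta_n)$, and use the bijection $\hat x\mapsto\la\hat x,\bt\ra$ to identify $k^n$ with $K$. For a single $f\in K[x_1,\ldots,x_c]$ the Weil descent $\hat f=(f_1,\ldots,f_n)$ with $f_j\in k[\hat x_1,\ldots,\hat x_c]$ is defined by the single polynomial identity $f(\la\hat x_1,\bt\ra,\ldots,\la\hat x_c,\bt\ra)=\sum_{j=1}^n f_j\theta_j$ in $K[x_{11},\ldots,x_{cn}]$, and I will extract the semi-local structure from this identity by conjugation.

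Next I apply each power $\sigma^i$ of the Frobenius $\sigma$ of $K/k$ to that identity, for $i=0,\ldots,n-1$, and expand the two sides. Since $\sigma^i$ is a ring automorphism of $K[x_{11},\ldots,x_{cn}]$ fixing every variable, it commutes with the polynomial substitution on the left: $f(\la\hat x_1,\bt\ra,\ldots)^{\sigma^i}=f^{\sigma^i}(\la\hat x_1,\bt^{\sigma^i}\ra,\ldots,\la\hat x_c,\bt^{\sigma^i}\ra)$, using $\la\hat x_j,\bt\ra^{\sigma^i}=\la\hat x_j,\bt^{\sigma^i}\ra$. On the right, the $f_j$ already have coefficients in $k$, hence are $\sigma$-invariant, so $(\sum_j f_j\theta_j)^{\sigma^i}=\sum_j f_j\theta_j^{\sigma^i}=\la\bt^{\sigma^i},\hat f\ra$. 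Stacking these $n$ equations together yields $\mu\circ\hat f=(f^{\sigma^i})_{i=0}^{n-1}\circ\lambda$, where $\mu\in Gl_n(K)$ is the matrix whose $i$-th row is $\bt^{\sigma^i}$, and $\lambda$ is the $K$-linear map $(\hat x_1,\ldots,\hat x_c)\mapsto(\la\bt^{\sigma^i},\hat x_j\ra)_{i,j}$, which after permuting rows and columns is the block-diagonal matrix with $c$ copies of $\mu$ on the diagonal. A one-line remark closes this step: the matrix $(\theta_j^{\sigma^i})$ of conjugates of a $k$-basis of a finite extension of finite fields is nonsingular, so $\mu$ — and therefore $\lambda$ — is invertible, and I may solve $\hat f=\mu^{-1}\circ(f^{\sigma^i})_{i=0}^{n-1}\circ\lambda$.

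Running this verbatim for all $c$ polynomials $f_1,\ldots,f_c$ of $\cF$ at once gives $\tilde\mu\circ\hat{\cF}=(f_1^{\sigma^i},\ldots,f_c^{\sigma^i})_{i=0}^{n-1}\circ\lambda$, with $\tilde\mu\in Gl_{cn}(K)$ again block-diagonal (up to a fixed permutation) with $c$ copies of $\mu$. The point is now to recognize $\cF':=(f_1^{\sigma^i},\ldots,f_c^{\sigma^i})_{i=0}^{n-1}$ as a $c$-local system: regard its $cn$ variables as $n$ consecutive blocks of $c$, so that the $i$-th batch of polynomials $\{f_1^{\sigma^i},\ldots,f_c^{\sigma^i}\}$ involves only the $i$-th block of variables — exactly the defining property of $c$-local. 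Hence $\hat{\cF}=\tilde\mu^{-1}\circ\cF'\circ\lambda$ exhibits $\hat{\cF}$ as $c$-semi-local. For the final assertion, given $\rho\in Gl_{cn}(k)$ I absorb the outer conjugation into the existing matrices, $\rho^{-1}\circ\hat{\cF}\circ\rho=(\rho^{-1}\tilde\mu^{-1})\circ\cF'\circ(\lambda\rho)$, which is again $c$-semi-local with the same local core $\cF'$, since $\rho^{-1}\tilde\mu^{-1}\in Gl_{cn}(K)$ and $\lambda\rho\in Gl_{cn}(K)$.

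There is no deep obstacle here; the work is bookkeeping, and the one place to be careful is the structural claim about $\lambda$: I must check that conjugating the coordinate $\la\hat x_j,\bt\ra$ by $\sigma^i$ only re-mixes the $n$ entries of the block $\hat x_j$ among themselves (never across blocks $j\ne j'$), which is what makes $\lambda$ block-diagonal with $c$ copies of $\mu$ and reduces its invertibility to that of $\mu$, and I should track that the matrix coming from conjugating the basis and the one coming from conjugating the linear forms are genuinely the same $\mu$ up to this reindexing. A minor caveat worth stating is that $\mu$, $\tilde\mu$, $\lambda$ and the local core $\cF'$ naturally have entries and coefficients in $K$ rather than in $k$, so the decomposition produced is $c$-semi-local over the extension $K$ — which is the form in which it is used in \S~\ref{crypto}, and under which the hypotheses of Theorems~\ref{solveG} and~\ref{lfdk} should be read for Weil descent systems.
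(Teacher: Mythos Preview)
Your proposal is correct and follows essentially the same argument as the paper: apply the Frobenius powers $\sigma^i$ to the defining Weil-descent identity, stack the resulting equations to obtain $\tilde\mu\circ\hat{\cF}=(f_1^{\sigma^i},\ldots,f_c^{\sigma^i})_{i=0}^{n-1}\circ\lambda$ with $\tilde\mu$ and $\lambda$ block-diagonal in $c$ copies of the conjugate-basis matrix $\mu$, and then invert $\tilde\mu$. Your added remarks on the invertibility of $\mu$ and on the decomposition living over $K$ rather than $k$ are welcome clarifications that the paper leaves implicit.
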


\section{Cryptographic applications}
\label{crypto}
\subsection{Improved attack on HFE}
In the general HFE (Hidden Field Equations) public key cryptosystem, a multivariate public encryption map is constructed by disguising a univariate polynomial over a large finite field in a $K/k$-Weil descent, where $k=\F_q$, $K=\F_{q^n}$, and $n$ is determined by the security parameter.  Let $F\in K[x]$ and let $\hat{\cF}$ be the Weil descent of $\cF=\{F\}$ with respect to a publicly known basis of $K$ over $k$.  Choose random $\lambda, \mu \in Gl_n (k)$.  Let $\cG = \mu\hat{\cF}\lambda$.  In the HFE system $\cG$ can serve as the public encryption function, with $\lambda,\mu$ as the secret decryption key.  The degree of $\hat{\cF}$ needs to be $O(1)$  so that the size of $\cG$ is polynomially bounded.
A well-known choice of $F$ is the {\em extended Dembowski-Ostrom polynomial}
\[ F= \sum_{0\le i\le j < r} a_{ij} x^{q^i+q^j} +\sum_{0\le i < r} b_i x^{q^i} + c \]
with $a_{ij}, b_i, c \in K$.

For cryptanalysis one would like to solve $\cG(x)=\alpha$ given $\alpha\in k^n$.  The solution set is $Z_k (\cG-\alpha) = Z(\{\cG-\alpha\}\cup\cE)$ where $\cE=\{x_i^q-x_i: i=1,\ldots,n\}$ and $k=\F_q$.  The results in \cite{HKY} bounds the last fall degree of $\{\cG-\alpha\}\cup\cE$ by $O(dq)$, leading to an attack that runs in $n^{O(dq)}$ time.  The more recent results in \cite{GMP,HKYY} bounds the last fall degree of  $\{\cG-\alpha\}\cup\cE$ by $O(q\log_q d)$
(with the bound in \cite{GMP} improving on the bound in \cite{HKYY} by a factor of 1/2), leading to an attack that solves the system in $n^{O(q\log_q d)}$ time.
When $d=O(1)$ and $q=O(1)$, the running time is polynomial in $n$ in these attacks.
Our results lead to a further improvement wherever $\cG$ is injective, as elaborated below.

For $\alpha\in k^n$, $\mu\circ(\hat{\cF}-\mu^{-1} (\alpha))\circ \lambda=\cG_{\alpha}$ where $\cG_{\alpha}=\cG-\alpha$. By Theorem~\ref{wd}, $\cG_{\alpha}$ is a 1-semi-local with the local components $F^{\sigma^i} - u^{\sigma^i}_{\alpha}$ where $\sigma$ is the Frobenius map over $k$, and $u_{\alpha}=\rho (\mu^{-1} (\alpha))$.

Note that for $\alpha\in k^n$, $\cG (x)=\alpha$ has a unique solution in $k^n$ if and only if $\hat{\cF} (z)=\mu^{-1} (\alpha)$ has a unique solution in $k^n$, if and only if $F(y)=u_{\alpha}$ has a unique solution in $K$.  In this case,
$F^{\sigma^i} (y)=u^{\sigma^i}_{\alpha}$ has a unique solution in $K$,
so $Z_K (\{ F^{\sigma_i}-u^{\sigma_i}_\alpha\})=1$.
It follows from Theorem~\ref{lfdk} that $d_{\cG_{\alpha}\cup\cE'} = O(dp)$, where $d=\deg F$.  Therefore $\cG(\hat{x}) = \alpha$ can be solved in $n^{O(dp)}$ time.
This improves on the attacks in \cite{GMP,HKY,HKYY}, when $d=O(1)$.

More generally, one can consider using semi-local systems as the basis for constructing public key cryptosystems.  The attacks implied in our results extend to public key encryption functions which are based on semi-local polynomials systems where either the number of closed point solutions is small, or the characteristic of the field is small.  It remains plausible  to construct public key cryptosystems based on semi-local polynomial systems with exponential number of closed point solutions over a finite field of large prime characteristic.   Such a method is discussed below.

\subsection{A public key cryptosystem based on semi-local polynomial systems}
First we describe the basic idea of our construction.
\begin{enumerate}
\item
Choose a finite field $k$.   The characteristic $p$ of $k$ should be large enough, say linear in the security parameter.
\item
Fix some constants $c$ and $r$.  Choose $n=c\ell$ such that $n$ is large enough, say linear in the security parameter. Find a $c$-local polynomial system $\cF$ that can be partitioned into subsystems
$\cF_1,\ldots,\cF_{\ell}$ where the polynomials
$F_{i1},\ldots, F_{ir}$ in $\cF_i$ are in $c$ variables.  We need that each $\cF_i$ defines an injective map from $k^{c}$ to $k^r$, so that $\cF$ defines an injective map from  $k^{n}$ to $k^m$ where $m=r\ell$.  On the other hand we need that $\cF_i$ is not injective as a map from $\bar{k}^{c}$ to $\bar{k}^r$.
\item
Choose random $\lambda\in Gl_n (k)$ and $\mu\in Gl_m (k)$, and form $\cG = \mu\circ\cF\circ\lambda$
\item
Announce $\cG$ as the public encryption map from $k^n$ to $k^m$.
The secret decryption key consists of $\lambda$ and $\mu$
\end{enumerate}

Since $\cF$ is injective from $k^{n}$, the resulting encryption map $\cG:k^{n}\to k^m$ is injective.  Decryption amounts to solving the following problem:  Given $y\in k^m$, to find the unique  $x\in k^{n}$ such that $\cG( x)=y$.

We choose $\cF_i$ of degree $O(1)$ so that the size of $\cG$ is polynomially bounded in $n$.

Suppose $d_{\cF_i} = O(1)$ for all $i$.  If $|Z(\cG)|= O(1)$ then by Theorem~\ref{solveG}, $Z(\cG)$ can be constructed in time polynomial in $n$.  To prevent the attack implicit in Theorem~\ref{solveG}, we want to make sure that in general
there are more than one solutions $x\in\bar{k}^r$ for $\cF_i (x)=\alpha_i$ for all $i$, so that there are in general exponential in $n$ many $x\in\bar{k}^n$ for $\cG (x)=\alpha$.

In light of Theorem~\ref{lfdk} we choose $k$ of large prime characteristic $p$.  More precisely, if $d_{\cF_i} = O(1)$ for all $i$, since $\cF_i$ is injective on $k^c$, $d_{\cG\cup\cE'}=O(p)$.  It will follow that $\cG (x)=\alpha$ can be solved in $n^{O(p)}$ time.  Therefore we want to make sure that $p$ is large.

Let us begin with concrete construction of $1$-semi-local cryptosystems, we can
construct 1-local polynomial system $\cF=\{F_i: i=1,\ldots, n\}$ over $k$, such that $F_i$ is a non-linear univariate permutation polynomial over $k$.  That is, as a univariate polynomial over $k$, $F_i$ is of degree greater than 1, and bijective as a map from $k$ to $k$.    Let $\cF$ be the map from $\bar{k}^n\to\bar{k}^n$, so that $\cF (x_1,\ldots,x_n)  = (F_1 (x_1),\ldots, F_n (x_n))$.

Then
choose random $\lambda, \mu \in Gl_n (k)$ and form $\cG = \mu\circ\cF\circ\lambda$.

Announce $\cG$ as the public encryption map from $k^n$ to $k^n$.
The secret decryption key consists of $\lambda$ and $\mu$.

Perhaps the simplest construction for $F_i$ is $F_i (x_i)=x_i^3$ where $|k^*|$ is not divisible by 3.
the map $x\to x^3$ is bijective on $k$.

Suppose $F_i$ is of degree $d > 1$.  Given $\beta\in\bar{k}^n$ there are $d^n$ solutions to $\cF (x_1,\ldots,x_n) = \beta$ in $\bar{k}^n$, counting multiplicity.

\subsection{Cryptanalysis}
Note that solving the polynomial system $\cG(x)=y$ also reduces to finding a semi-local decomposition $\cG=\mu\circ\cF\circ\lambda$ for $\cG$.   What we have is the semi-local case of the {\em isomorphism of polynomials} (IP) problem: Given two polynomial maps $\cF$ and $\cG$ from $k^n$ to $k^m$, given by two sets of $n$-variate polynomials $\cG = (G_1,\ldots, G_m)$ and $\cF=(F_1,\ldots,F_m)$,  to find
$\lambda\in Gl_n(k)$, $\mu \in Gl_m (k)$ such that $\cG = \mu\circ\cF\circ\lambda$.

As it turns out the square 1-semi-local construction of cryptosystems is vulnerable to a {\em determinant-of-Jacobian} attack developed below.  We will also discuss how partial attacks may be mounted for square $c$-semi-local systems when $c > 1$ by applying the determinant-of-Jacobian method.  We will show how to modify the square system construction into non-square systems to avoid such attacks in the last subsection.

\subsubsection{Determinant-of-Jacobian attack}
Let ${\bf g}= (g_1,\ldots,g_m)$ with $g_i \in k[x_1,\ldots,x_n]$.   We denote by $J({\bf g})$ the $m$ by $n$ Jacobian matrix whose $(i,j)$-th entry is $\partial_j g_i$.

Let $\lambda$ be a linear map from $k^n$ to $k^m$. Let $A_{\lambda}$ denote the $m$ by $n$ matrix representing $\lambda$ such that for $x=(x_i)_{i=1}^n \in k^n$, $\lambda(x)=(\lambda_1 (x),\ldots,\lambda_m (x))$ where $\lambda_i (x)=\sum_{j=1}^n \lambda_{ij}x_j$ and $\lambda_{ij}$ is the $(i,j)$-entry of $A_{\lambda}$.

Below we give a self-contained account of the determinant-of-Jacobian attack \cite{BFP,Kay}.

The following lemma is an easy consequence of linearity in taking partial derivatives.  That is, $\partial_i (af+bg)=a\partial_i f + b \partial_i g$ for $a,b\in\bar{k}$, $f,g\in k[x_1,\ldots,x_n]$.

\begin{lemma}
Let ${\bf f}= (f_1,\ldots,f_n)$ with $f_i \in k[x_1,\ldots,x_m]$, and ${\bf g}= (g_1,\ldots,g_n)$ with $g_i \in k[x_1,\ldots,x_m]$.
Suppose ${\bf g}= \lambda\circ {\bf f}$ where $\lambda$ is a linear map from $k^n$ to $k^n$.  Then
$J({\bf g}) = A_{\lambda} \cdot J({\bf f})$.
\end{lemma}

The following lemma follows from the chain rule: if $f\in k[x_1,\ldots,x_n]$, and ${\bf g}=(g_1,\ldots,g_n)$ where $g_1,\ldots,g_n\in k[x_1,\ldots,x_m]$,  then for $i=1,\ldots, m$, $\partial_i f(g_1,\ldots, g_n)=\sum_{j=1}^n (\partial_j f)\circ {\bf g} \cdot \partial_i g_j$.
\begin{lemma}
Let ${\bf f}= (f_1,\ldots,f_m)$ with $f_i \in k[x_1,\ldots,x_n]$, and ${\bf g}= (g_1,\ldots,g_m)$ with $g_i \in k[x_1,\ldots,x_n]$.
Suppose ${\bf g}= {\bf f}\circ\lambda$ where $\lambda$ is a linear map from $k^n$ to $k^n$.  Then
$J({\bf g}) = J({\bf f})\circ\lambda \cdot  A_{\lambda}$.
\end{lemma}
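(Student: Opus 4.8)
The plan is to verify the claimed matrix identity entrywise, using the chain rule recalled just before the statement. Write $\lambda(x)=(L_1(x),\ldots,L_n(x))$ with $L_k(x)=\sum_{l=1}^n \lambda_{kl}x_l$, where $\lambda_{kl}$ is the $(k,l)$-entry of the $n\times n$ matrix $A_\lambda$; then $g_i=f_i\circ\lambda=f_i(L_1,\ldots,L_n)$ for $i=1,\ldots,m$. First I would fix indices $i\in\{1,\ldots,m\}$ and $j\in\{1,\ldots,n\}$ and compute the $(i,j)$-entry of $J({\bf g})$, namely $\partial_j g_i$.

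Applying the chain rule to $f=f_i$ and the $n$-tuple $(L_1,\ldots,L_n)$ gives
\[
\partial_j g_i=\partial_j\bigl(f_i(L_1,\ldots,L_n)\bigr)=\sum_{k=1}^n\bigl((\partial_k f_i)\circ\lambda\bigr)\cdot\partial_j L_k .
\]
Since each $L_k$ is linear, $\partial_j L_k=\lambda_{kj}$, so $\partial_j g_i=\sum_{k=1}^n\bigl((\partial_k f_i)\circ\lambda\bigr)\,\lambda_{kj}$. Next I would identify the right-hand side of the asserted identity: the matrix $J({\bf f})$ has $(i,k)$-entry $\partial_k f_i$, so $J({\bf f})\circ\lambda$ — by which we mean the matrix obtained by substituting $\lambda$ into each entry — has $(i,k)$-entry $(\partial_k f_i)\circ\lambda$, and multiplying on the right by $A_\lambda$ yields an $m\times n$ matrix whose $(i,j)$-entry is $\sum_{k=1}^n\bigl((\partial_k f_i)\circ\lambda\bigr)\,\lambda_{kj}$. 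This agrees with the expression for $\partial_j g_i$ obtained above, and since $i,j$ were arbitrary, $J({\bf g})=J({\bf f})\circ\lambda\cdot A_\lambda$.

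I do not expect any genuine obstacle here; the only thing to be careful about is the bookkeeping. One must check that the shapes are consistent ($J({\bf g})$ and $J({\bf f})\circ\lambda$ are $m\times n$ while $A_\lambda$ is $n\times n$), that "composition with $\lambda$" is understood as acting on the polynomial entries of $J({\bf f})$, and that the summation index $k$ produced by the chain rule is precisely the index contracted in the matrix product $\bigl(J({\bf f})\circ\lambda\bigr)A_\lambda$. Once these conventions are pinned down, the computation is immediate.
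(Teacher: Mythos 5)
Your proof is correct and is exactly the argument the paper intends: the paper proves this lemma simply by citing the chain rule $\partial_j(f_i\circ\lambda)=\sum_k\bigl((\partial_k f_i)\circ\lambda\bigr)\partial_j L_k$, and your entrywise verification with $\partial_j L_k=\lambda_{kj}$ is the fully written-out version of that. The index bookkeeping matches the paper's convention for $A_\lambda$, so nothing is missing.
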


From the above two lemmas we have the following (see Lemma 18 of \cite{BFP}).
\begin{lemma}
\label{jacobian}
Let ${\bf f}= (f_1,\ldots,f_m)$ with $f_i \in k[x_1,\ldots,x_n]$, and ${\bf g}= (g_1,\ldots,g_m)$ with $g_i \in k[x_1,\ldots,x_n]$.
Suppose ${\bf g}= \mu\circ {\bf f}\circ\lambda$ where $\mu$ is a linear map from $k^m$ to $k^m$ and $\lambda$ is a linear map from $k^n$ to $k^n$.  Then
$J({\bf g}) = A_{\mu} \cdot J({\bf f})\circ\lambda \cdot A_{\lambda}$.
\end{lemma}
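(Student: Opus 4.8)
The plan is to chain together the two preceding lemmas. Set $\mathbf{h} = \mathbf{f}\circ\lambda$, so that $\mathbf{g} = \mu\circ\mathbf{h}$, where $\mathbf{h} = (h_1,\ldots,h_m)$ with $h_i \in k[x_1,\ldots,x_n]$. First I would apply the lemma governing composition with $\mu$ on the left (the $\mathbf{g} = \lambda\circ\mathbf{f}$ lemma, with the roles of $\lambda$ and $\mathbf{f}$ played by $\mu$ and $\mathbf{h}$): this gives $J(\mathbf{g}) = A_\mu \cdot J(\mathbf{h})$. Then I would apply the lemma governing composition with $\lambda$ on the right to $\mathbf{h} = \mathbf{f}\circ\lambda$, obtaining $J(\mathbf{h}) = J(\mathbf{f})\circ\lambda \cdot A_\lambda$. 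Substituting the second identity into the first yields
\[
J(\mathbf{g}) = A_\mu \cdot J(\mathbf{f})\circ\lambda \cdot A_\lambda,
\]
which is the claimed formula. Here $J(\mathbf{f})\circ\lambda$ denotes the matrix obtained by composing each entry of $J(\mathbf{f})$ with $\lambda$.

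The only point requiring a moment's care is that both previously stated lemmas are phrased with the outer composition applied to a \emph{tuple} of polynomials of the correct shape, so I should check that $\mathbf{h} = \mathbf{f}\circ\lambda$ indeed consists of $m$ polynomials in $k[x_1,\ldots,x_n]$ (it does, since $\lambda: k^n \to k^n$ and each $f_i \in k[x_1,\ldots,x_n]$), and that the matrix dimensions line up: $A_\mu$ is $m\times m$, $J(\mathbf{f})$ and hence $J(\mathbf{f})\circ\lambda$ is $m\times n$, and $A_\lambda$ is $n\times n$, so the product is a well-defined $m\times n$ matrix as required. There is no real obstacle here — the content is entirely in the two earlier lemmas (linearity of partial derivatives for the $\mu$-step, the chain rule for the $\lambda$-step), and this statement is a bookkeeping composition of them. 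The main thing to state explicitly is that one may factor an arbitrary $\mu\circ\mathbf{f}\circ\lambda$ through the intermediate tuple $\mathbf{f}\circ\lambda$ before applying $\mu$, which is immediate from associativity of composition of maps.
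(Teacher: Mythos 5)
Your proposal is correct and matches the paper's intended argument exactly: the paper gives no separate proof, simply noting that the lemma "follows from the above two lemmas," which is precisely the factorization through $\mathbf{f}\circ\lambda$ that you carry out. Your dimension check is a harmless bonus; nothing further is needed.
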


Now consider the case of square semi-local polynomial systems, that is, $\cG=\mu\circ\cF\circ\lambda$ in $n$ variables, where $\cF$ is $c$-local, divided into $n/c$ subsystems, each containing $c$ polynomials in $c$ local variables.

Suppose we are given $\cG$ and $c$-local $\cF$ and we want to find $\mu,\lambda\in Gl_n (k)$ such that $\cG=\mu\circ\cF\circ\lambda$.  By Lemma~\ref{jacobian},  $J(\cG) = A_{\mu} \cdot J(\cF)\circ\lambda \cdot A_{\lambda}$, hence
$\det J(\cG) = \det A_{\mu} \cdot \det (J(\cF)\circ\lambda) \cdot \det A_{\lambda}$.  Let $\det A_{\mu}\det A_{\lambda} = \alpha\in k$.  Since $\det (J(\cF)\circ\lambda)=(\det {\cF})\circ\lambda$, we have arrived at a polynomial equivalence problem: given $\det J(\cG)$ and $\det J(\cF)$, to find $\lambda\in Gl_n (k)$ and a constant $\alpha\in k$ such that $\det J(\cG)=\alpha (\det J(\cF))\circ\lambda$.

Note that $\det J(\cG)$ and $\det J(\cF)$ should not be given in dense form in general since their degree is $O(n)$.  Instead, given a polynomial $H$ in $n$ variables of degree $d$ in dense from, an arithmetic circuit of size polynomial in $n$ and $d$ can be constructed for the evaluation of $\det J(\cH)$ on input $x\in k^n$ (see Theorem 17 of \cite{BFP}).

This approach has been applied to solve the isomorphism of polynomials problem efficiently when $\cF$ is special.  For example, the case $\cF = (x_1^d,\ldots, x_n^d)$ is successfully tackled in \cite{BFP} (see also \cite{Kay}).  We will extend the attack to square 1-semi-local polynomial systems (the case $\cF$ is 1-local).

\subsubsection{Determinant-of-Jacobian method for solving square 1-semi-local case of IP problem}
As noted before, solving the polynomial system $\cG(x)=y$ also reduces to solving the isomorphism of polynomials problem with $\cG$ being 1-semi-local: to find $\mu,\lambda\in Gl_n (k)$ such that $\cG=\mu\circ \cF\circ\lambda$.

\begin{lemma}
\label{flambda}
Suppose $h=f\circ\lambda\in k[x_1,\ldots,x_n] $ where $f$ is univariate defined over $k$ and $\lambda$ is a linear form in $x_1,\ldots,x_n$ over $k$.  Then every irreducible factor of $h$ over $k$ is of the form $g\circ\lambda$ where $g$ is an irreducible factor of $f$.
\end{lemma}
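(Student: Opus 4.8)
The plan is to reduce to the case $\lambda = x_1$ by a linear change of variables, and then to deduce the statement from unique factorization in $k[x_1,\ldots,x_n]$ together with the fact that an irreducible univariate polynomial over $k$ remains irreducible over the purely transcendental extension $k(x_2,\ldots,x_n)$. If $f$ is constant or $\lambda = 0$ the conclusion is vacuous, so I would assume $f$ nonconstant and $\lambda = \sum_{i=1}^n a_i x_i$ with some $a_i \neq 0$. Since $\lambda$ is then a nonzero linear form, I can complete it to a $k$-basis $\lambda, \mu_2, \ldots, \mu_n$ of the space of linear forms; the associated matrix $M \in Gl_n(k)$ gives a $k$-algebra automorphism $P \mapsto P \circ M$ of $R := k[x_1,\ldots,x_n]$ sending $x_1$ to $\lambda$. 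Writing $f(x_1)$ for $f$ regarded inside $R$, we get $h = f(x_1)\circ M$. An algebra automorphism sends irreducibles to irreducibles and a prime factorization to a prime factorization, and it carries $g(x_1)$ to $g\circ\lambda$; so it suffices to show that every irreducible factor of $f(x_1)$ in $R$ is of the form $g(x_1)$ for some irreducible factor $g$ of $f$ in $k[t]$.

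For that, I would factor $f = c\prod_i g_i^{e_i}$ in $k[t]$ with $c \in k^{*}$ and the $g_i$ distinct monic irreducible, so that $f(x_1) = c\prod_i g_i(x_1)^{e_i}$ in $R$. Once we know each $g_i(x_1)$ is irreducible in $R$, this is already the prime factorization of $f(x_1)$ in the UFD $R$, hence every irreducible factor of $f(x_1)$ is an associate of some $g_i(x_1)$, which is exactly the claim. Thus the whole statement comes down to the single assertion: \emph{if $g \in k[t]$ is irreducible, then $g(x_1)$ is irreducible in $R = k[x_1,\ldots,x_n]$.}

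To prove this assertion I would write $R = A[x_1]$ with $A = k[x_2,\ldots,x_n]$, a UFD with fraction field $F = k(x_2,\ldots,x_n)$. The leading coefficient of $g$ lies in $k^{*} \subseteq A^{*}$, so $g(x_1)$ is primitive, and by Gauss's lemma it is irreducible in $A[x_1]$ if and only if it is irreducible in $F[x_1]$. To see that $g$ stays irreducible over $F$, factor $g = \prod_j h_j$ into monic irreducibles $h_j \in F[x_1]$: the roots of each $h_j$ are roots of $g$, hence algebraic over $k$, so the coefficients of $h_j$ lie in the algebraic closure of $k$ inside $F$. Since $F/k$ is purely transcendental, $k$ is algebraically closed in $F$ (a nonconstant element of $F$ cannot be algebraic over $k$, as that would force one of the $x_i$ to be algebraic over $k$); hence each $h_j \in k[x_1]$, and irreducibility of $g$ over $k$ forces a single factor.

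I expect the only nontrivial ingredient to be this last assertion — that an irreducible element of $k[t]$ remains irreducible in $k[x_1,\ldots,x_n]$; the linear change of variables, unique factorization, and the bookkeeping relating the factorizations of $h$ and of $f$ are all routine. If one wishes to avoid citing Gauss's lemma, one can argue directly that any factor of $g(x_1)$ in $k[x_1,\ldots,x_n]$ not involving $x_1$ must divide the nonzero constant leading coefficient of $g$ in $x_1$ and hence be a unit, so every nontrivial factorization of $g(x_1)$ would already take place in $F[x_1]$; but the route through Gauss's lemma plus "$k$ is algebraically closed in a purely transcendental extension'' seems cleanest.
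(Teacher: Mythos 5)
Your proof is correct, but it takes a genuinely different route from the paper's. The paper factors $f=\prod_{i=1}^d(x-\alpha_i)$ over $\bar{k}$, so that $f\circ\lambda=\prod_i(\lambda-\alpha_i)$ is a product of degree-one (hence irreducible over $\bar{k}$) polynomials, and then observes that an irreducible factor over $k$ must be the product of a single Frobenius orbit $\prod_j(\lambda-\tau^j(\alpha))=f'\circ\lambda$, with $f'$ the corresponding irreducible factor of $f$; this is a three-line argument but leans on $k$ being a finite (or at least perfect) field so that Galois descent of the factorization is available. You instead normalize $\lambda$ to $x_1$ by a ring automorphism and reduce everything to the single assertion that an irreducible $g\in k[t]$ stays irreducible in $k[x_1,\ldots,x_n]$, which you prove via Gauss's lemma over $A=k[x_2,\ldots,x_n]$ together with the fact that $k$ is algebraically closed in the purely transcendental extension $k(x_2,\ldots,x_n)$. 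Your argument is longer but works over an arbitrary base field, avoids Galois theory entirely, and is more careful about units, associates, and multiplicities (the paper's phrasing ``$g$ is the product of $\lambda-\tau^j(\alpha)$'' silently normalizes away a constant and ignores repeated roots, which is harmless here but left implicit). In the paper's setting $k$ is always finite, so the shorter Frobenius-orbit argument suffices; your version is the one to use if the lemma is ever needed over a general field.
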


\ \\{\bf Proof}  Write $f = \prod_{i=1}^d (x-\alpha_i)$ with $\alpha_i\in\bar{k}$.  Then $f\circ\lambda=\prod_{i=1}^d (\lambda -\alpha_i)$.   So if $g$ is an irreducible factor of $f\circ\lambda$.  Then $g$ is the product of $\lambda-\tau^j (\alpha)$ where $\tau$ is the Frobenius automorphism of $\bar{k}$ over $k$ and $\alpha=\alpha_i$ for some $i$.  Then $g=f'\circ\lambda$ where $f'$ is the product of the conjugate factors $x-\tau^j (\alpha)$, and $f'$ is an irreducible factor of $f$.  The lemma follows.  $\Box$

\begin{lemma}
\label{factor-lambda}
Let $g\in k[x_1,\ldots,x_n]$.  Suppose there exists univariate $f\in k[x]$ and a linear map $\lambda:k^n\to k$ ($\lambda(0) =0$), such that $g=f\circ\lambda$.
Then $\lambda$ is the unique linear factor of $g-g(0)$ with $\lambda(0) = 0$, hence $\lambda$ can be obtained by factoring $g-g(0)$.
\end{lemma}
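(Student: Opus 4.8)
\noindent\emph{Proof idea.} The plan is to recover $\lambda$ by a direct substitution argument establishing that $\lambda$ divides $g-g(0)$, and then to pin down uniqueness using unique factorization in $k[x_1,\ldots,x_n]$ together with the fact that a non-constant univariate polynomial has only finitely many roots, exactly one of which --- namely $0$ --- gives a linear factor through the origin.

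First I would note that, since $\lambda(0)=0$ and $g=f\circ\lambda$, we have $g(0)=f(\lambda(0))=f(0)$, so that $g-g(0)=f(\lambda)-f(0)$. The constant term of $f(x)-f(0)\in k[x]$ vanishes, so we may write $f(x)-f(0)=x\,\tf(x)$ with $\tf\in k[x]$; substituting the linear form $\lambda$ for $x$ gives $g-g(0)=\lambda\cdot\tf(\lambda)$ in $k[x_1,\ldots,x_n]$. Hence $\lambda$ is a linear factor of $g-g(0)$ vanishing at the origin. I would also flag at this point that the statement implicitly presumes $g$ non-constant (equivalently $f$ non-constant and $\lambda\neq 0$): otherwise $g-g(0)=0$ and the assertion is vacuous.

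For uniqueness, let $\ell\in k[x_1,\ldots,x_n]$ be any linear polynomial with $\ell(0)=0$ dividing $g-g(0)$. Factoring over $\bar k$, write $f(x)-f(0)=c\prod_{j=0}^{r}(x-\beta_j)^{m_j}$ with $c\in k$ nonzero, the $\beta_j\in\bar k$ pairwise distinct, and $\beta_0=0$ (since $0$ is a root of $f(x)-f(0)$), so $\beta_j\neq 0$ for $j\geq 1$. Then $g-g(0)=c\prod_{j=0}^{r}(\lambda-\beta_j)^{m_j}$, and each $\lambda-\beta_j$ is a degree-one, hence irreducible, element of the UFD $\bar k[x_1,\ldots,x_n]$. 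Since $\ell$ is also irreducible and divides this product, $\ell$ is a nonzero scalar multiple of some $\lambda-\beta_j$; evaluating at $0$ and using $\ell(0)=0$ forces $\beta_j=0$, i.e. $j=0$. Thus $\ell$ is proportional to $\lambda$, so up to a nonzero scalar $\lambda$ is the only linear factor of $g-g(0)$ vanishing at the origin, and it is produced by factoring $g-g(0)$ over $k$ and keeping the (unique) linear factor that vanishes at $0$.

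I do not anticipate a genuine obstacle here; the only points needing care are (i) passing between $k$ and $\bar k$, which is handled by carrying out the divisibility step directly over $k$ and invoking unique factorization over $\bar k$ to identify the irreducible factors, and (ii) excluding the degenerate constant case. The substance of the uniqueness claim is simply that the roots of $f(x)-f(0)$ are distinct, so among the parallel hyperplanes $\{\lambda=\beta_j\}$ cut out by $g=g(0)$ exactly one, namely $\{\lambda=0\}$, passes through the origin.
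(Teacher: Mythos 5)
Your proof is correct and follows essentially the same route as the paper's: both factor $f-f(0)$ over $\bar{k}$ as $x^{e}$ times a product of linear factors $x-\alpha_i$ with $\alpha_i\neq 0$, compose with $\lambda$ to get $g-g(0)=\lambda^{e}\prod_i(\lambda-\alpha_i)$, and observe that only the factor $\lambda$ vanishes at the origin. You merely make explicit the uniqueness step (irreducibility of degree-one polynomials and unique factorization in $\bar{k}[x_1,\ldots,x_n]$) that the paper dismisses with ``the lemma easily follows,'' along with the harmless degenerate case of constant $f$.
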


\ \\{\bf Proof}  We have $f(x)-f(0) =x^e \prod_{i=1}^{d-e} (x-\alpha_i)$ where $d=\deg f$, $e \ge 1$, $\alpha_i\in\bar{k}$ with $\alpha_i\neq 0$ for all $i$.  Since $g(0) = f(\lambda(0))=f(0)$,  we have
$$g-g(0) =f\circ \lambda - f(\lambda (0)) = (f-f(0))\circ\lambda = \lambda^e \prod_{i=1}^{d-e} \lambda-\alpha_i,$$
and the lemma easily follows.  $\Box$

As mentioned before, given $\cG$ in dense from, a polynomial size arithmetic circuit can be constructed for the evaluation of $\det J(\cG)$ on input $x\in k^n$ \cite{BFP}.   Then by applying Kaltofen's algorithm \cite{Kal} with the arithmetic circuit for $\det J(\cG)$ as the input, we can factor $\det (J(\cG))$ over $k$ in random polynomial time.

Let $\cF = (f_1 (x_1),\ldots,f_n (x_n))$.  Then $J(\cF)$ is diagonal with $\partial_i f_i$ as the $(i,i)$-th entry.  Put $h_i=\partial_i f_i$.  Then
$$\det (J(\cF)\circ\lambda)=(\det J(\cF))\circ\lambda = \prod_{i=1}^n h_i (\lambda_i (x_1,\ldots,x_n)).$$
By Lemma~\ref{jacobian}, $\det J(\cG)$ is a constant multiple of $(\det J(\cF))\circ\lambda$.
So by Lemma~\ref{flambda}, every irreducible factor $g$ of $\det J(\cG)$ over $k$ is of the form $h\circ\lambda_i$ for some $i$ with $h$ being an irreducible factor of $h_i$.  By Lemma~\ref{factor-lambda}, up to constant multiple, $\lambda_i$ can be obtained as the unique linear form that appears in the factorization of $g-g(0)$.
It follows that we can obtain $\lambda_1,\ldots, \lambda_n$ up to constant multiples and permutation.

\subsubsection{Square $c$-semi-local case for $c > 1$}
Suppose $n=c m$ and the variables $x_1,\ldots,x_n$ are divided into $m$ groups, the first group $x_1,\ldots, x_c$; the second group $x_{c+1},\ldots, x_{2c}$, and so on.
Suppose $\cF$ is $c$-local, so that $\cF$ as a set is divided $m$ subsets $\cF_1,\ldots,\cF_m$ where $\cF_i$ depends on the $i$-th group of variables.   Let $h_i=\det J(\cF_i)$.  Then $\det J(\cG) = \alpha \prod_{i=1}^m h_i(\lambda_{ci-c+1},\ldots,\lambda_{ci})$ with $\alpha\in k$.  By factoring $\det J(\cG)$ we obtain irreducible factors of these
$h_i(\lambda_{ci-c+1},\ldots,\lambda_{ci})$.  In special cases, we can deduce useful information on the $\lambda_i$'s.
For example, if the first polynomial $f_1$ of $\cF_1$ is univariate then $\partial_i f_1 = 0$ for $i\neq 1$, it follows that $\partial_1 f_1$ is a factor of $h_1=\det J(\cF_1)$.  It then follows that $(\partial_1 f_1)\circ\lambda_1$ is a factor of $\det (\cG)$.  So $\det J(\cG)$ has an irreducible factor of the form $h \circ \lambda_1$ where $h$ is univariate.  After factoring $\det J(\cG)$, we may try every irreducible factor as a candidate and apply the attack for the 1-semi-local case to it to obtain $\lambda_1$ up to a constant multiple.

Thus, for $c > 1$, the determinant-of-Jacobian attack may be mounted to derive partial information about $\lambda$.
It will be interesting to investigate the scope of square $c$-semi-local systems where a full attack can be made.  However we will turn our attention on constructive means for preventing the determinant-of-Jacobian attack.

\subsubsection{Non-square semi-local construction that avoids the determinant-of-Jacobian attack}
We note that the determinant-of-Jacobian attack, which is based on the equality in Lemma~\ref{jacobian}, depends critically on the the system being square.
To avoid such attacks we apply a simple trick to modify a square $c$-semi-local construction into a non-square $c$-semi-local construction, for $c > 1$.

Suppose we have a local $c$-polynomial map that  ${\bf f}: k^c\to k^c$ defined by $c$ homogeneous polynomials $f_1,\ldots,f_c$ of degree $d$ in $c$ variables $x_1,\ldots, x_c$.  Let $h$ be a homogeneous polynomial of degree $d$ in $x_1,\ldots x_c$.    Then the map ${\bf f}': k^c \to k^{c+1}$ sending $x\in k^c$ to $(f_1 (x),\ldots, f_c (x), h(x))$ is injective.  For security consideration we also require $h$ not to be linearly dependent on $f_1,\ldots f_c$.

Suppose $\cF$ is a $c$-local system in $n=mc$ variables divided into $m$ blocks $\cF_i$, $i=1,\ldots, m$, where each $\cF_i$ consists of $c$ homogeneous polynomials of degree $d$ in $c$ local variables $x_{ci-c+1},\ldots,x_{ci}$.  Apply the trick to modify each $\cF_i:k^c\to k^c$ into some $\cF_i': k^c\to k^{c+1}$ by adding a homogeneous $H_i$ of degree $d$ in the same $c$ local variables, we get a non-square $\cF': k^{n}\to k^{n+m}$ defined by $\cF_i'$, $i=1,\ldots,m$.

Below we describe a concrete construction of 2-semi-local $\cG: k^{2m} \to k^{3m}$ where $n=2m$.

As before,
choose a finite field $k$ of characteristic $p$ large enough (say linear in the security parameter), and
choose $n$ that is large enough, say linear in the security parameter.

Assume $|k^*|$ is not divisible by 3, so that $x^3$ defines a bijection on $k$.   Let ${\bf f} = (f_1, f_2)$ where $f_1 (x_1,x_2)=x_1^3$, and $f_2 (x_1,x_2) = x_2^3$.  Then ${\bf f}$ is bijective on $k^2$.  Let $h (x_1,x_2) =x_1^2 x_2 + x_1 x_2^2$.  Then ${\bf f}'=(f_1,f_2,h)$ defines an injective map from $k^2$ to $k^3$.

We construct 2-local polynomial system $\cF$ with $m$ blocks $\cF_1,\ldots,\cF_m$ using ${\bf f}$.  Thus $\cF_i$ consists of $f_1 (x_{2i-1}, x_{2i}), f_2 (x_{2i-1}, x_{2i})$.  Then $\cF$ defines a bijection on  $k^n$.

We construct 2-local polynomial system $\cF'$ with $m$ blocks $\cF'_1,\ldots,\cF'_m$ using ${\bf f}'$.  Thus $\cF'_i$ consists of $f_1 (x_{2i-1}, x_{2i}), f_2 (x_{2i-1}, x_{2i}), h(x_{2i-1}, x_{2i})$.  Then $\cF'$ defines an injective map from $k^n$ to $k^{n+m}$.

Choose random $\lambda\in Gl_n (k), \mu \in Gl_{n+m} (k)$.   Let $\cG = \mu\circ\cF'\circ\lambda$.
Then announce $\cG$ as the public encryption map from $k^{n}$ to $k^{n+m}$.

The secret decryption key consists of $\lambda$ and $\mu$.

From the equality $\cG = \mu\circ\cF'\circ\lambda$ it should be clear why we require $h$ to be linearly independent of $f_1,\ldots,f_c$.  Otherwise since $\mu^{-1} \cG = \cF'\circ\lambda$, such a linear dependence will induce linear conditions on $\mu^{-1}$.

The system defined by $\cG$ is not square, consisting of $n+m$ polynomials in $n$ variables.  So the determinant-of-Jacobian attack cannot be applied directly.

In trying to adapt the attack to the non-square case, we may consider the submatrix $B$ of $A_{\mu}^{-1}$ consisting of the $n=2m$ rows numbered $3i-2$, $3i-1$ for $i=1,\ldots, m$ and note that $B\cG = \cF\circ\lambda$. Note that $J(B\cG)=J(\cF\circ\lambda)=J(\cF)\circ\lambda$ is a diagonal matrix with $3\lambda_i^2$ for $i=1,\ldots,n$ on the diagonal.
However $B$ is unknown, moreover $B$ and $J(\cG)$ are not square matrices.  So not only $B J(\cG)$ cannot be computed, but also $\det (B J (\cG)$, being in this case a complicated function in the entries of $B$ and $J (\cG)$, cannot be split.  Therefore useful information about the determinant cannot be extracted simply from $J(\cG)$.

The simple construction described above resists all the attacks considered in this paper.  It remains an interesting question whether an efficient attack can be found.

\end{document}